\pdfoutput=1

\documentclass[english,11pt]{article}

\def\cC{\mathcal{C}}
\def\cD{\mathcal{D}} 
\def\cE{\mathcal{E}}

\def\cG{\mathcal{G}}

\def\cZ{\mathcal{Z}}

\def\mE{\mathbb{E}}

\def\mP{\mathbb{P}}

\def\mR{\mathbb{R}}

\def\eq{\[}
  \def\en{\]}




\newcommand{\indep}{\perp \!\!\! \perp}


\makeatletter
\@ifundefined{eqslim}{%
}{}
\makeatother

\input{macros_arxiv}
\usepackage[utf8]{inputenc} 
\usepackage{hyperref}      
\usepackage{url}       
\usepackage{booktabs}      
\usepackage{amsfonts}      
\usepackage{nicefrac}      
\usepackage{microtype}     
\usepackage{xcolor}         
\usepackage{amsmath}
\usepackage{graphicx}
\usepackage{amsmath}
\usepackage{amssymb}
\usepackage{mathtools}
\usepackage{amsthm}
\usepackage{subfigure}
\usepackage{tabularx}
\usepackage{array}
\newcolumntype{L}{>{\raggedright\arraybackslash}X}
\newcolumntype{C}{>{\centering\arraybackslash}m{1.2cm}} 
\newcolumntype{N}{>{\centering\arraybackslash}m{0.9cm}} 



\newcommand{\Z}{\mathcal{Z}}

\newcommand{\D}{\mathcal{D}}
\newcommand{\Pop}{\mathcal{P}}   
\newcommand{\Th}{\Theta}

\newcommand{\gt}{\mathrm{gt}}
\newcommand{\simu}{\mathrm{sim}}
\newcommand{\sce}{\psi}
\newcommand{\scesp}{\Psi}

\newcommand{\Ygt}{Y^{\gt}}
\newcommand{\Ysim}{Y^{\simu}}
\newcommand{\Qgt}{Q^{\gt}}
\newcommand{\Qsim}{Q^{\simu}}
\newcommand{\zsim}{z^{\simu}}

\newcommand{\Loss}{L}
\newcommand{\disc}[2]{\Loss\!\left(#1,#2\right)}

\newcommand{\Cj}[1]{\mathcal{C}_j\!\left(#1\right)}
\newcommand{\Vhatm}[1]{\hat V_m\!\left(#1\right)}


\newcommand{\iid}{i.i.d.\ }


\usepackage[left]{lineno} 

\begin{document}

\theoremstyle{plain}

\title{\Large Model-Free Assessment of Simulator Fidelity via Quantile Curves}

\author{%
  Garud Iyengar\thanks{Department of IEOR and Data Science Institute, Columbia University. 
  Emails: \texttt{garud@ieor.columbia.edu}, \texttt{yl5782@columbia.edu}, \texttt{kaizheng.wang@columbia.edu}.}
  \and Yu\mbox{-}Shiou Willy Lin\footnotemark[1]
  \and Kaizheng Wang\footnotemark[1]
}

\date{}

\maketitle

\begin{abstract}
As generative AI models are increasingly used to simulate real-world systems, quantifying the ``sim-to-real'' gap is critical. For each input setting of interest---which we call a \emph{scenario}, such as a survey question or operating condition---the real and simulated systems are associated with unobserved latent population parameters, and their discrepancy varies across scenarios. A fundamental challenge is that, for any given scenario, this discrepancy cannot be observed directly, since both systems are accessible only through finite samples, often of heterogeneous sizes across scenarios. Standard predictive inference methods are therefore ill-suited, as they quantify uncertainty in observable outputs rather than latent population parameters. To address this, we construct confidence sets for these latent parameters and use them to derive a robust proxy for the sim-to-real discrepancy. We then estimate the quantile function of this proxy to obtain a distribution-level risk profile of the simulator, which supports a broad range of statistical summaries, including statistical inference for the real output distribution in a new scenario, the calculation of risk measures like Conditional Value-at-Risk (CVaR), and principled comparisons across simulators. Our method is model-agnostic and handles general output spaces, such as categorical survey responses and continuous multi-dimensional data. We demonstrate the practical utility of this method by evaluating the alignment of four major LLMs with human populations on the WorldValueBench dataset.
\end{abstract}

\noindent{\bf Keywords:} Simulation, Quantile function estimation, Human-AI alignment, Conformal inference, Output Uncertainty Quantification

\clearpage

\section{Introduction}

The adoption of simulators across operations and manufacturing, agent-based modeling in social-science research, user surveys, and education \citep{ZHANG2019103123,abms,Argyle_Busby_Fulda_Gubler_Rytting_Wingate_2023,human_sub} has accelerated with recent advances in artificial intelligence (AI). At the platform level, industry ecosystems such as NVIDIA Omniverse and Earth-2 exemplify the push toward high-fidelity, AI-enabled digital twins, and LLMs are increasingly used to build generative agents that emulate human responses \citep{park2023generative, lu2025prompting}. Collectively, these developments are driving  growth in simulation across domains.

Against this backdrop, quantifying the \emph{distributional} discrepancy between simulated and real-world outcomes is critical. This ``sim-to-real'' gap is a pervasive challenge in domains ranging from LLMs to computing systems and robotics~\citep{llm_discrepancy, durmus2023towards, tobin2017domain, peng2018sim}. Crucially, it depends on the \emph{scenario} (e.g., operating conditions or user prompts). In our setting, each scenario is associated with an unobserved \emph{latent population quantity} for both the real system and the simulator, observed only through finite samples.  As a simulator is deployed over a heterogeneous population of scenarios, the discrepancy between these latent scenario-level quantities becomes a random variable induced by the scenario distribution. Assessing fidelity therefore requires characterizing this discrepancy distribution, rather than a single average error, which can hide rare but consequential failures. This distributional characterization is the primary objective of our work.

Existing uncertainty quantification methods that focus on \emph{individual} outcomes (e.g.,~conformal prediction) are insufficient for this task, as they do not gauge how well the simulator captures the true output distribution. Conflating outcome validity with distributional fidelity can have serious downstream implications. For instance, in survey research, an LLM-based simulator might generate coherent individual responses while yielding systematically biased \emph{population-level aggregates}~\citep{westwood2025surveyexistential}. Recent work has identified distributional alignment as a key objective~\citep{cao2025survey}, and we address the practical challenge of assessment. We provide theoretical guarantees for sim-to-real discrepancy quantification without making structural assumptions about the simulator, treating it strictly as a black box. 

\paragraph{Contributions}

We characterize the quantile function of the discrepancy between real and simulated output distributions across scenarios. Because the real-world distribution is only observed through finite, often heterogeneous, batches, the oracle quantile function is not directly computable. We construct per-scenario confidence sets for the latent real-world parameters, use them to form a proxy for the sim-to-real discrepancy, and establish finite-sample guarantees for the resulting quantile curve at any desired level. Moreover, we allow the coverage of the confidence-set 
to be chosen as a function of the number of sample available. This yields an adaptive procedure that produces tighter quantile curves in data-rich scenarios, while preserving reliability and interpretability when data are scarce.

We illustrate the methodology on \emph{WorldValueBench}~\citep{zhao2024worldvaluesbenchlargescalebenchmarkdataset}, constructed from the World Values Survey~\citep{WVS_Wave7_2020}, by generating synthetic response profiles and benchmarking four LLMs against human survey data. Leveraging the large human sample size, we also study tightness and propose a data-driven procedure to construct confidence bands for the population quantile function.

Our main contributions are:
\begin{enumerate}[(i)]
\item \textbf{Model-free, finite-sample fidelity profiling.} A black-box assessment framework with finite-sample, distributional guarantees under heterogeneous batching.
\item \textbf{Quantile-based risk summaries.} A calibrated discrepancy quantile curve enabling the estimation of tail-risk measures, such as CVaR, and other summaries.
\item \textbf{Inference and comparison.} Statistical inference for real-world parameters in new scenarios and principled comparisons across simulators.
\end{enumerate}

\paragraph{Related Literature}

\begin{figure}[ht!]
    \centering
    \includegraphics[width=0.9\linewidth]{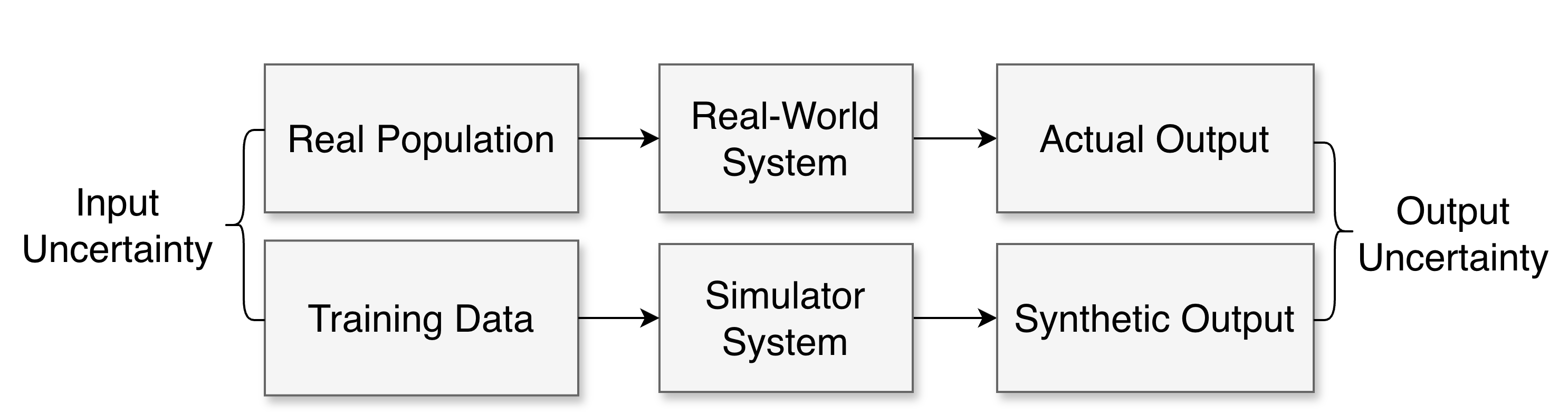}
    \caption{Simulation Uncertainty Quantification.}
    \label{fig:sim_uq}
\end{figure}

The sim-to-real discrepancy has been the focus of the uncertainty quantification (UQ) literature. Figure~\ref{fig:sim_uq} positions our work within this landscape. Following the taxonomy of \citet{ROY20112131}, we distinguish \emph{input} uncertainty from \emph{output} uncertainty. Work on input uncertainty \citep[see][]{chen2024quantifying,barton2014quantifying,lam2022cheap} typically assumes that the simulator model is known and it produces a faithful representation of the real world when the input follows the correct distribution. The goal is then to characterize how errors in characterizing the input distribution, due to a finite amount of noisy 
data,  propagate to errors in the output. While early work on input uncertainty focused on confidence intervals at a fixed significance level $\alpha$ for functionals of the output, recent work by \citet{chen2024quantifying} constructs Kolmogorov--Smirnov--type uncertainty bands for the CDF of the output discrepancy.  

In contrast, modern digital twins rely on complex ML models (e.g., an LLM or a deep neural network), and one often does not have access to the model directly. Moreover, it can be computationally prohibitive to calibrate its parameters exactly. Consequently, we treat the simulator as a black box and focus on directly characterizing the sim-to-real discrepancy of the output---an approach termed \emph{output uncertainty quantification} by \citet{jeon2024uncertaintyquantificationusingsimulation}. This perspective is related to simulation-based inference (SBI) \citep{cranmer2020frontier,pmlr-v89-papamakarios19a}, which likewise studies inference with complex simulators, often under implicit generative models. The key difference is that SBI typically targets posterior inference on latent simulator parameters, whereas our goal is to assess the \emph{output-level fidelity} of a fixed simulator across scenarios. Returning to the broader UQ literature, another distinction is that much of the work is asymptotic, whereas we provide finite-sample guarantees, which is critical when only a limited number of scenarios are available. Finally, while the classical input uncertainty literature is primarily concerned with Monte Carlo noise, which is the variance induced by stochastic sampling conditional on a fixed input model, our work incorporates both variance and bias of the simulator relative to the real world. This is precisely the focus of ``sim-to-real'' discrepancy.

Recent work in the LLM literature has also begun to study output uncertainty using model-agnostic discrepancy measures. For example, \citet{llm_opinion} aggregate the sim-to-real gap into a single scalar summary of overall error or bias, and \citet{huang2025uncertaintyquantificationllmbasedsurvey} bound the discrepancy at a fixed quantile level. These approaches correspond to evaluating particular functionals of the distribution of the sim-to-real discrepancy. More generally, the literature has largely emphasized scalar summaries or fixed-level bounds, leaving open the question of how to obtain a calibrated, distribution-level risk profile across scenarios. Two-sample testing is another nearby line of work. Kernel and related two-sample tests \citep{gretton2012kernel, JMLR:v24:21-1289} are similarly sample-based and often model-agnostic, but their primary goal is to determine whether two samples are distinguishable or consistent with having been drawn from the same distribution. By contrast, our target is not a global equality test between pooled real and simulated data, but the calibrated quantile profile of \emph{scenario-level} sim-to-real discrepancy across scenarios.

To obtain such distribution-level guarantees in a model-agnostic way, we draw on ideas 
similar to conformal and predictive inference methods
\citep{vovk_2005,bates2021distribution}, which provide finite-sample, distribution-free guarantees for black-box predictors. However, standard
conformal formulations are typically stated in terms of
\emph{observable} outcomes (e.g., a future label or response) given each
input, whereas our target is a \emph{scenario-level distributional
  parameter} (e.g., the whole conditional output distribution or its functionals) that is only observed through finite batches, often of heterogeneous sizes. Recent conformal variants that target distributional objects
\citep{snell2022quantile,budde2025statistical} still assume homogeneous data, and do not directly address heterogeneous batching with latent scenario-level targets. We address this setting by providing finite-sample guarantees for quantile-based risk profiles of
sim-to-real discrepancy when both real and simulated distributions are
observed through heterogeneous batches. 

The remainder of the paper proceeds as follows. In Section~\ref{sec2} we
present a motivating example, and formally state the problem. In
Section~\ref{sec3} we present the main theoretical results, and in
Section~\ref{sec4} we discuss an application of our methodology. We also
address the tightness of our methodology to the actual quantile curve in
Section~\ref{sec5}. In Section~\ref{sec6} we conclude by discussing future
directions.   


\section{Framework and Motivating Example}\label{sec2}

We begin with a motivating use case through a survey-style LLM and then formalize the quantile estimation problem. Although we introduce the notation through an LLM example, the framework itself is more general. In Appendix~\ref{extra_examples}, we illustrate how the same formulation applies to other settings, including discrete-event manufacturing and broader control/scientific simulators with categorical and continuous outputs. Specifically, we ran our main numerical analysis under a survey setting, but conducted additional analysis in Appendix~\ref{apx:add_app} under Building control setting in Appendix~\ref{extra_examples}.

Suppose a media research company plans to survey customers on a particular topic with multinomial outcomes (e.g., `Agree', `Neutral', `Disagree'). The company wants to estimate customer opinion before committing resources to an expensive population study. The company is considering whether to 
  directly use a digital twin based on  pre-trained LLM or to develop its own fine-tuned LLM. 
  In order to make this decision, the company needs to evaluate 
  the distribution of the sim-to-real 
  gap across the range of questions that it is likely to encounter. 
  While pre-trained LLMs can appear to  provide coherent predictions, in 
  Appendix~\ref{apx:add_app} 
  we show an example where several pre-trained LLMs 
  underperform the 
  baseline that randomly picks answers! 
The methods developed in this work can help with such assessments.


We formalize this setting and the associated theoretical challenges, by considering two sources of uncertainty: variation across scenarios and sampling noise within each scenario. First, we assume that scenarios (questions) are drawn from a population distribution, i.e. $\sce \sim \scesp$, and the past dataset consists of $m$ scenarios $\{\sce_j\}_{j=1}^m$ drawn \iid from $\scesp$. The real system (human) is characterized by a latent profile $z \in \Z$  distributed according to $\Pop$ over $\Z$. For each pair $(\sce,z)$, the real system (human) produces a categorical
outcome $\Ygt \in \{e_1, \cdots, e_d\} \subset \mR^d,$ with conditional distribution $\Qgt(\cdot \mid z,\sce)$. The simulator (LLM) produces an outcome $\Ysim$ with conditional
distribution $\Qsim(\cdot \mid \zsim,\sce, r)$ over the same outcome space, where the $\zsim \in \cZ_{\simu}$ denotes a synthetic profile drawn \iid from a simulator population distribution $ \Pop^{\simu}$ on $\cZ_{\simu}$,
and $r$ denotes LLM settings, including prompting strategy,
hyperparameters, and other API settings.\footnote{We keep $r$ fixed during
  calibration so that variation in $\hat q_j$ reflects scenario
  differences rather than encoding drift or model choice; choosing a
  different $r$ defines a different simulator.} In the running example,
$\Qgt(\cdot| z, \sce) = \text{Categorical}(\Pi^{\mathrm{gt}} (z, \sce))$,
where $\Pi^{\mathrm{gt}}(\sce,z) :=\big(\Qgt(\{1\}\mid
z,\sce),\dots,\Qgt(\{d\}\mid z,\sce)\big)\in\mathcal{P}^d$ and
$\mathcal{P}^d$ is the $d-1$-simplex $
\{u\in[0,1]^d:\sum_{i=1}^d u_i=1\}$. For any question $\sce$, we can
marginalize the population effect, hence denote by $\Qgt(\cdot \mid \sce)
= \mathbb E_{z \sim \Pop} [\Qgt(\cdot \mid z,\sce)]$, the conditional
distribution of outcome $\Ygt$ given $\sce$.  

Let $p(\sce)$ be a population statistic of interest, which is a functional of the conditional distribution $\Qgt(\cdot \mid \sce)$ and lives in a parameter space $\Th$. In our running example, $\Th = \mathcal P^d$, $p(\sce)$ is the mean response of survey respondents, and simulator statistic of interest $q(\sce)$ can be defined similarly under the simulator population $\Pop^{\simu}$. Further examples of $p(\sce)$ are given in Section~\ref{sec3}. To simplify notation, we will write $p(\sce)$ and $q(\sce)$ as $p_\sce$ and $q_\sce$. Summing up:
\begin{align*}
    p_\sce := p(\sce) &:= \mE_{y \sim \Qgt}[y] = \mathbb E_{z\sim\Pop} \big[\Pi^{\mathrm{gt}}(\sce,z)\big] \in \Th, \\
    q_\sce := q(\sce) &:= \mathbb E_{z\sim\Pop^\simu} \big[\Pi^{\mathrm{sim}}(\sce,z)\big]\in\Th.
\end{align*}
Second, we only observe finite samples per scenario. For each $j \in [m]$, we are given  $n_j$ \iid profiles $z_{j,1:n_j} \sim \Pop$, with which we observe $n_j$ ground-truth outcomes $y^{\gt}_{j,i} \sim \Qgt(\cdot \mid z_{j,i},\sce_j)$. Comparably, we generate $k$ simulator outcomes $y^{\simu}_{j,\ell} \sim \Qsim(\cdot \mid \zsim, \sce_j,r)$ using a simulation pool $\zsim_{j,1:k} \sim \Pop^{\simu}$ with fixed $k$ across $j$ to standardize simulator sampling. For brevity, we will write to be $\hat p_j$ and $\hat q_j$ be plug-in estimators of $p_j$ and $q_j$. In the motivating example these are empirical averages, though the formulation extends to more general plug-in estimators (e.g., empirical distributions or other scenario-level functionals).

\begin{table}[ht]
\caption{Illustrative Example of $\D$}\label{tbl:dataset}
\centering
\small
\setlength{\tabcolsep}{8pt}
\renewcommand{\arraystretch}{1.15}
\begin{tabularx}{\linewidth}{@{}C L N L@{}}
\toprule
\textbf{Scenario} & \textbf{True Distribution ($p_j$)} & $\boldsymbol{n_j}$ & \textbf{Real Output ($y^{\mathrm{gt}}$)} \\
\midrule
$1$ & $(0.8,\ 0.1,\ 0.1)$ & $10$ & $(1,\ 1,\ 2,\ 3,\ 1,\newline 2,\ 1,\ 1,\ 1,\ 1)$ \\
$2$ & $(0.3,\ 0.4,\ 0.3)$ & $4$ & $(3,\ 1,\ 2,\ 2)$ \\
$\vdots$ & $\cdots$ & $\vdots$ & $\cdots$ \\
$m$ & $(0.05,\ 0.05,\ 0.9)$ & $8$ & $(1,\ 3,\ 3,\ 3,\newline 3,\ 2,\ 2,\ 3)$ \\
\bottomrule
\end{tabularx}
\end{table}

The dataset is $\D = \big\{ (\sce_j,\hat p_j,\hat q_j,n_j, k)
\big\}_{j=1}^m$. For each scenario $j$, let $\D^{\gt}_j = \{(z_{ji},
y^{\gt}_{ji})\}_{i=1}^{n_j}$ denote the human-side data used to construct
$\hat p_j$, and $\D^{\simu}_j = \{(z^{\simu}_{jl},
y^{\simu}_{jl})\}_{l=1}^{k}$ denote the simulator outputs used to
construct $\hat q_j$. For concreteness, Table~\ref{tbl:dataset} provides
an illustrative example of $\D$. Aligning with our motivating survey
example, $p_j$ is a multinomial parameter and the human sample sizes
$\{n_j\}$ may vary across scenarios, so the realized outcomes
$\{y^{\gt}_{ji}\}$ differ both in content and in length. We record
outcomes as category indices in $[d]$; equivalently, one may use one-hot
encoding $e_y \in \mR^d$. For instance, when $j=2$ the true distribution
is $p_2=(0.3,0.4,0.3)$ and a sample of size $n_2=4$ yields $\hat p_2=(0.25,0.5,0.25)$. 

The discrepancy for a specific question $\sce$ between simulated and real output distributions is defined as
\[
  \Delta_\sce := L(p_\sce, \hat q_\sce),
\]
where $L : \Theta \times \Theta \to [0,\infty)$ is a user-chosen
discrepancy function. We define the discrepancy in terms of $\hat q_\sce$
to reflect a fixed simulator query budget; when simulator sampling is
effectively unlimited, one may replace $\hat q_\sce$ with $q_\sce$ to study
the intrinsic discrepancy $L(p_\sce,q_\sce)$. We 
discuss 
this
further
in Section~\ref{sec3}. Our method also works with any function $L$ and allows practitioners to choose one that suits their application,
e.g.
Kullback--Leibler (KL) divergence for categorical outputs or a Wasserstein distance for nonparametric empirical distributions.

Now, we can state 
our task: 

\emph{Construct a 
  function $\hat V_m(\cdot,\cD) : (0,1) \to \mathbb{R}$ such that, for a new $\sce \sim \scesp$ and all $\tau \in (0,1)$,} 
\[ 
  \mP_{\sce \sim \scesp}\big( \Delta_\sce \le \hat V_m(\tau, \D) \big| \D \big) \geq \tau - \varepsilon_m,
\]
\emph{holds with high probability over the draw of $\D$ and $\lim_{m\rightarrow \infty}\varepsilon_m  = 0$. 
}

Note that $\hat V_m$ is indexed by both the prescribed level $\tau$ and the 
dataset $\cD$. Establishing our desired result requires controlling two sources of uncertainty: (i) \emph{scenario uncertainty}, since we observe only finitely many scenarios $\{\sce_j\}_{j=1}^m \sim \scesp$, and (ii) \emph{finite-sample uncertainty} within each scenario, since $p_j$ and $q_j$ are only observed through finite samples that produce $\hat p_j$ and $\hat q_j$. Our construction of $\hat V_m(\tau,\cD)$ explicitly accounts for both, and the main theorem will yield non-vacuous, finite-sample coverage guarantees. 

We summarize the notations introduced in this section in Table~\ref{tab:notation}, along with few upcoming notations from Section~\ref{sec3}.

\begin{table}[ht!]
\centering
\small
\caption{Notation summary.}
\label{tab:notation}
\begin{tabular}{p{0.19\linewidth} p{0.74\linewidth}}
\toprule
Symbol & Meaning \\
\midrule
$\psi \sim \Psi$ & Scenario drawn from the scenario population. \\
$z \sim P$ & Real-world respondent drawn from the population distribution. \\
$p(\psi)$(or $p_\psi$) & Real-world population parameter for scenario $\psi$. \\
$q(\psi)$(or $q_\psi$) & Simulator-side population parameter for scenario $\psi$ under fixed simulator settings. \\
$\hat p_\sce, \hat q_\sce$ & Finite-sample estimators of $p_\sce$ and $q_\sce$. \\
$n_j$ & Human sample size for scenario $j$. \\
$k$ (or $k_j$) & Simulator sample size for scenario $j$. \\
$L(\cdot,\cdot)$ & Discrepancy function between real and simulated distribution. \\
$\Delta_\psi$ & True sim-to-real discrepancy for scenario $\psi$. \\
$\gamma_j$ & Per-scenario confidence-set coverage level. \\
$\bar \gamma_m$ & Average coverage level across $m$ scenarios. \\
$\cC_j(\hat p_j,\gamma_j)$ & $\gamma_j$-confidence set for the latent parameter $p_j$. \\
$\hat \Delta_j$ & Pseudo-discrepancy for scenario $j$, defined via a confidence set around $\hat p_j$. \\
$\hat V_m(\tau)$ & Pseudo-discrepancy quantile curve built from $m$ scenarios, indexed by quantile level $\tau$. \\
$\tau$ & Quantile-function argument, counted in the usual direction from lower to upper quantiles. \\
$\alpha$ & Upper-tail fraction / exceedance level used when discussing tail scenarios. \\
\bottomrule
\end{tabular}
\end{table}


\section{Theoretical Result}\label{sec3}

In this section, we construct $\hat V_m,$, establish its theoretical guarantees, and conclude by discussing the benefits of distribution-level fidelity guarantees.

\subsection{Methodology}\label{subsec:method}
Our dataset $\D$ consists of $m$ scenarios; the true parameter value (resp. finite sample estimate) for scenario $j$ is $p_j := p(\sce_j)$ (resp. $\hat{p}_j$); the simulator parameter value (resp. finite sample estimate) for scenario $j$ is $q_j := q(\sce_j)$ (resp. $\hat{q}_j$).

Our procedure has two steps:
\begin{enumerate}
\item For each $j$, define a $\gamma_j$-confidence set $\cC_j(\hat p_j, \gamma_j)$, i.e. 
    \begin{equation}\label{eq:C-def}
      \mP(p_j \in \Cj {\hat p_j, \gamma_j}) \geq \gamma_j,
    \end{equation}
    and the pseudo-discrepancy
    \begin{equation}
      \label{eq:hatD-def}
      \hat{\Delta}_j := \sup_{u\in \Cj {\hat p_j, \gamma_j} } \Loss \big(u,\hat q_j\big).
    \end{equation}
    
  \item 
    For $\alpha \in (0,1)$, let 
    \begin{equation}
      \label{eq:Vhatm-def}
      \Vhatm{\tau} :=  \text{the $\tau$-quantile of}
      \{\hat\Delta_j\}_{j=1}^m.
    \end{equation}
  \end{enumerate}

We will show that $\Vhatm{\tau}$ is a calibrated quantile curve for the discrepanwILLY624
cy distribution, up to explicit finite-sample corrections. Note that the key design choice in our method is the confidence set $\Cj \cdot$ that quantifies the uncertainty of $\hat{p}_j$. 
This uncertainty set is the key device that links per-scenario sampling error to distribution-level calibration. Next, we show how to construct these sets using off-the-shelf concentration bounds, starting with the multinomial setting.

\begin{example}[Multinomial Confidence Set]\label{ex:multinom}   
  Suppose the outcomes are multinomial with parameter $p_j\in\Th=\mathcal{P}^d:=\{u\in[0,1]^d:\sum_{i=1}^d u_i=1\}$, i.e.,
  $Y_{j,i}\mid \sce_j \sim \mathrm{Categorical}(p_j)$ and $\hat p_j$ is the empirical average vector from $n_j$ samples.
  Let $\mathrm{KL}(\cdot \| \cdot)$ denote the KL divergence. Then
  \eq
    \Cj{\hat p_j, \gamma_j} := \Big\{ u\in\mathcal{P}^d: \mathrm{KL}(\hat p_j\|u) \le \frac{d-1}{n_j}\ln(\frac{2(d-1)}{1-\gamma_j}) \Big\}. \notag
      \label{eq:mult_ex}
  \en
  is a $\gamma_j$-confidence set for $p_j$.
\end{example}

This result is a variant of Chernoff-Hoeffding Bound, see Lemma~\ref{lem:kl-tail-tight} \citep{concentration_bdd_mn}. 

\begin{example}[Bounded Outcomes]\label{ex:bdd_cs}
  Suppose $Y_{j,i}\in[a,b]$ are \iid with population mean $p_j:=\mE[Y_{j,i}]$ for scenario $j$, and $\hat p_j:=\frac{1}{n_j}\sum_{i=1}^{n_j} Y_{j,i}$ is the sample mean. Then
  \eq
    \Cj{\hat p_j, \gamma_j} := \Big\{ u \in [a,b]:
    |u-\hat p_j| \le (b-a)\sqrt{\tfrac{\log(2/(1-\gamma_j))}{2n_j}} \Big\},
  \en
  is a $\gamma_j$-confidence set for $p_j$.
\end{example}


\begin{example}[Nonparametric $W_1$ Confidence Set]\label{ex:nonparam_w1_subg}
  Suppose the true outcome distribution for scenario $j$ is $P_j$ and samples are drawn as $Y_{j,1:n_j}\stackrel{\text{iid}}{\sim}P_j$. Assume $P_j$ is $\sigma$-sub-Gaussian. Let $\widehat P_j$ denote the empirical distribution from $n_j$ samples. Then
  \[
    \mathcal{C}_j^{W_1}(\widehat P_j, \gamma_j) := \Big\{ Q:\ W_1(\widehat P_j, Q)\le r_j(n_j, \gamma_j, \sigma) \Big\},
  \]
  where
  \[
    r_j = \frac{512 \sigma}{\sqrt{n_j}} + \sigma \sqrt{\frac{256 e}{n_j} \log \frac{1}{1-\gamma_j}},
  \]
  is a $\gamma_j$-confidence set for the true distribution $P_j$.
\end{example}
The form of $\cC_j$ is  justified by the concentration inequality proved in \citet{wasser_dist}.  

Next, we show that $\hat\Delta_j$ can be efficiently computed for each scenario $j$.
\begin{enumerate}[(a)]
\item \emph{KL-Divergence:} 
  For 
  $\mathcal C_j(\hat p_j, \gamma_j)=\{ u\in\mathcal P^d:\ \mathrm{KL}(\hat p_j\|u)\le r_j \}$ and loss $\Loss(u,\hat q_j)=\mathrm{KL}(u\|\hat q_j)$, the pseudo-discrepancy 
  \[
    \hat\Delta_j = \sup_{u\in\mathcal C_j(\hat p_j, \gamma_j)} \mathrm{KL} \big(u \| \hat q_j\big).
  \]
  The maximizer 
  lies on the boundary $\mathrm{KL}(\hat p_j\|u)=r_j$ and can be computed via standard maximization. 
\item \emph{Wasserstein--1:} 
  For 
  $\mathcal C^{W_1}_j(\widehat P_j, \gamma_j)=\{ u: W_1(u,\widehat P_j)\le
  r_j \}$ and loss $\Loss(u,\widehat Q_j)=W_1(u,\widehat Q_j)$, triangle
  inequality implies that 
  \[
    \hat\Delta_j =\sup_{u\in \mathcal C^{W_1}_j(\widehat P_j, \gamma_j)} W_1(u,\widehat Q_j) \le  W_1(\widehat P_j,\widehat Q_j) + r_j.
    \]
\end{enumerate}

\subsection{Calibrated Quantile Curve Theory}\label{subsec:quantile_thm}
Our
theoretical guarantee relies on the following two assumptions.
\begin{assumption}[Independent data]\label{ass:indep} 
The scenarios $\sce_j \sim \scesp$ are \iid, 
$\mathcal D^{\gt}_j \indep \mathcal D^{sim}_j$ conditioned on $\sce_j$, and the new $(\sce,\mathcal D^{sim})$ is independent of $\D$. In addition, the human sample sizes $\{n_j\}_{j=1}^m$ are treated as fixed (deterministic) design quantities. 
\end{assumption}

\begin{assumption}[Regular Discrepancy] \label{ass:discr}
    The discrepancy $\Loss: \Th \times \Th\to[0,\infty)$ is jointly continuous on $\Th \times \Th$ and satisfies $\disc{u}{u}=0$ for all $u\in\Th$. 
\end{assumption}

Before stating our main result, we consider the special case $\gamma_j \equiv \tfrac12$. Then, in the asymptotic regime $m\to\infty$, our guarantee \eqref{eq:main_guarantee} implies the following: for any target quantile level $\tau\in(0,1)$,
\begin{equation}\label{eq:informal_half}
    \mP_{\sce\sim\scesp}\Big( \Delta_\sce \le \hat
    V_m\Big(\tfrac{1+\tau}{2}\Big) \Big| \D \Big) \ge \tau. 
\end{equation}
In words, the $\tau$-quantile of the discrepancy $\Delta_\sce$ is approximated by  the empirical pseudo-quantile curve at an \emph{inflated index} $(1 + \tau)/2$. This inflation induces an inherent conservativeness: for $\tau \approx 0$, the index $(1+\tau)/2$ is at least the median of the pseudo-discrepancy distribution,  leading to substantial inflation in the lower tail.

In general, the $\gamma_j$'s can be different, and we obtain the following asymptotic result: 
\begin{equation}\label{eq:asmptotic_guarantee}
  \mP_{\sce\sim\scesp} \Big( \Delta_\sce \le \hat V_m \big( \bar \gamma_m \tau + (1-\bar \gamma_m) \big) \Big| \D\Big) \ge \tau,
\end{equation}
where $\bar{\gamma}_m = \tfrac{1}{m} \sum_{j=1}^m\gamma_j$. Now,  
to achieve $\tau$-level coverage we evaluate the empirical pseudo-quantile at an index that is affine in $\tau$ with slope $\bar\gamma_m$. One would ideally want $\bar{\gamma}_m = 1$, or equivalently $\gamma_j \equiv 1$, to control the index inflation. However, increasing $\gamma_j$  typically enlarges the set $\cC_j$, and thus, increases the values of the pseudo-discrepancies $\hat \Delta_j$ and shift the curve $\hat V_m$ upward. Thus, one needs to balance index adjustment with increased conservativeness of the pseudo-quantile curve itself.

This trade-off motivates choosing $\gamma_j$ adaptively as a function of the human sample size $n_j$. When $n_j$ is large, $p_j$ is typically more concentrated around
$\hat{p}_j$; therefore, 
$\gamma_j$ can be set high without significantly enlarging $\cC_j$. Therefore, we set $\gamma_j=g(n_j)$, for an increasing function $g:\mathbb R_+\to[0,1]$. See Section~\ref{sec4} for numerical evidence illustrating the conservativeness of fixed $\gamma$ and the benefits of adaptive coverage. We are now ready to rigorously state the main theorem.

\begin{theorem}\label{thm:general}
Suppose Assumptions~\ref{ass:indep}--\ref{ass:discr} hold and $\gamma_j \in (0,1)$ for each $j \in [m]$. Let $\cC_j(\hat{p}_j, \gamma_j)$ denote any $\gamma_j$-confidence set (see~\eqref{eq:C-def}), $\hat{\Delta}_j$ denote the pseudo-discrepancy (see~\eqref{eq:hatD-def}), and let $\hat{V}_m$ denote the quantile function of the pseudo-discrepancy (see~\eqref{eq:Vhatm-def}). For $\alpha\in(0,1)$, let $[\alpha]_m := m^{-1}\lceil m\alpha\rceil$ denote the rounded empirical grid point associated with $\alpha$, so that $[\alpha]_m \in [\alpha,\alpha+\frac1m]$, and let $[x]_+ := \max\{x,0\}$. Set
\[
  \alpha_{\mathrm{eff}}(\alpha) := \Big[
    \big(\bar\gamma_m-\varepsilon_{\mathrm{al}}(\alpha,\delta)\big)[\alpha]_m - b_m(\delta)\sqrt{[\alpha]_m} \Big]_+,
\]
where $\bar\gamma_m := \frac1m\sum_{j=1}^m\gamma_j$,  $\varepsilon_{\mathrm{al}}(\alpha,\delta) =
\sqrt{\frac{\log(3m/\delta)}{2m [\alpha]_m}}$, and $b_m(\delta) = \sqrt{\frac{1}{2m}\log\frac{3m}{\delta}}$. Then, with probability at least $1-\delta$ over $\mathcal D$, simultaneously for all $\alpha\in(0,1)$,
\begin{equation}\label{eq:main_guarantee}
    \mP_{\sce\sim\scesp}\Big( \Delta_\sce^{(k)} \le \hat V_m\big(1-\alpha_{\mathrm{eff}}(\alpha)\big) \Big| \D \Big) \ge 1 - \alpha - \varepsilon_m(\delta),
\end{equation}
where $\varepsilon_m(\delta) := \sqrt{\frac{\log(6/\delta)}{2m}} + \tfrac{1}{m}$. In particular, the remainder terms $\varepsilon_m, \varepsilon_{\mathrm{al}},$ and $b_m \to 0$ as $m \to \infty.$  
\end{theorem}

Unlike standard conformal inference, which targets uncertainty quantification for future observable outcomes (e.g. responses), Theorem~\ref{thm:general} provides uncertainty quantification for the latent, scenario-specific population parameter that governs outcome generation (e.g. conditional mean response). In other words, the guarantee operates at the level of underlying distributions, in contrast to single draws. We next provide a proof sketch.

\emph{Proof Sketch:}

Suppose we have access to the true $p_j$ for each scenario $j$. Then
$\{\Delta_j\}_{j=1}^m$ are \iid, and the Dvoretzky--Kiefer--Wolfowitz (DKW) inequality yields a uniform deviation bound for the empirical CDF of the discrepancy, which can be translated into upper bounds on its quantile function. In our setting, however, (i) $p_j$ is latent and (ii) each scenario is observed with a heterogeneous sample size $n_j$.


Our goal is to use empirical quantiles of the observable proxy $\{\hat\Delta_j\}$ to approximate the quantile function of the unobservable discrepancies $\{\Delta_j\}$. For simplicity, we fix $\gamma_j \equiv \gamma$. The key steps are as follows:
\begin{itemize}
    \item Let $t_\alpha$ denote the population $(1-\alpha)$-quantile  of $\Delta_j$, and define the corresponding hardest scenarios $S_\alpha:=\{j:\Delta_j\ge t_\alpha\}$. By definition, $\tfrac{|S_\alpha|}{m} \approx \alpha$.
    \item By construction, $\hat\Delta_j \ge \Delta_j$ holds with probability $\gamma$ for each $j$. Consequently, among indices in $S_\alpha$, at least a $\gamma$-fraction satisfy $\hat\Delta_j \ge t_\alpha$.
    \item Hence, at least a $\gamma\alpha$-fraction of the pseudo-discrepancies exceed $t_\alpha$. In other words, the $(1-\alpha)$ quantile of $\Delta_j$ is dominated by the $(1 - \gamma \alpha)$ quantile of $\hat\Delta_j$.
\end{itemize}
Therefore, for any quantile level $\tau := 1-\alpha$, taking the $(\gamma \tau + (1-\gamma))$-quantile of the pseudo-discrepancies yields an upper bound on the $\tau$-quantile of the oracle discrepancies.

This is why the quantile index must shift: to upper bound the oracle $\tau$-quantile of $\Delta$, we need to use a more conservative pseudo-quantile level so that the resulting pseudo-threshold is high enough to cover the target quantile despite the miscoverage. This index adjustment is the first source of conservativeness; the second source is the DKW deviation that relates empirical and population CDFs over finitely many scenarios. Combining these two ingredients yields the theorem. The complete proof is in Appendix~\ref{pf:main}.
\qed
 
\begin{remark}
  A subtle modeling choice is that Theorem~\ref{thm:general} characterizes the distribution of $\Delta^{(k)} = \Loss(p, \hat{q}^{(k)})$, where $\hat{q}^{(k)}$ is estimated using $k$ simulator samples, instead of the actual simulator parameter $q$. Conceptually, $\Delta^{(k)}$ is the \emph{correct} target when one wants to characterize performance for a fixed query budget $k$. In contrast, $\Delta$ is more natural when simulator queries are cheap, and the goal also includes quantifying the simulator's inherent bias. For brevity, we will, henceforth, ignore the $k$ superscript in following discussions. A proof to the latter case is presented in Corollary~\ref{cor:k_dependence}, see Appendix~\ref{thm:proof} for details. 
\end{remark}

\subsection{Benefits of Distribution-Level Characterization}
We now turn to the benefits of obtaining a distributional-level
characterization of sim-to-real discrepancy. For illustrative purposes we set $\bar \gamma_m = \tfrac{1}{2}$. With this calibrated quantile, for an unseen scenario $\sce$ we can derive a confidence set for $p_\sce$ given a
simulated $\hat q_\sce$. For a target confidence level
$\bar\alpha\in(0,1)$, define 
\[
  S_{\bar\alpha} := \big\{u\in\Theta: \Loss(u,\hat q_\sce)\le \tau_{\bar\alpha} \big\}, \quad \tau_{\bar\alpha}:=\Vhatm{1-\tfrac{\bar\alpha}{2}}.
\]
By Theorem~\ref{thm:general}, $p_\sce\in S_{\bar\alpha}$ with probability at least $1-\bar\alpha$ up to an $o_m(1)$ remainder.

In addition, the calibrated quantile curve can be compressed to provide summary statistics, hence strictly more informative than any single risk summary. Returning to the motivating example, a media research firm can use a calibrated AUC to assess average simulator bias, while a pollster or risk-averse product team may track a calibrated $\mathrm{CVaR}_{\alpha}$ to bound rare but consequential misreads of public sentiment.

Concretely, define the index–adjusted (calibrated) curve
\[
\hat V_m^{\mathrm{cal}}(\tau) := \hat V_m\Big(\tfrac{1+\tau}{2}\Big), \qquad \tau\in[0,1],
\]
which matches the coverage guarantee in Equation~\ref{eq:informal_half} in asymptotic regime. We then summarize overall average sim-to-real discrepancy via the calibrated AUC
\[
\mathrm{AUC}_{\mathrm{cal}} := \int_0^1 \hat V_m^{\mathrm{cal}}(\tau)\, d\tau,
\]
and tail risk via a calibrated right-tail CVaR: for $\alpha\in(0,1)$,
\[
\mathrm{CVaR}^{\mathrm{cal}}_\alpha := \frac{1}{\alpha}\int_{1-\alpha}^{1} \hat V_m^{\mathrm{cal}}(u)\,du.
\]
In words, $\mathrm{AUC}_{\mathrm{cal}}$ aggregates the entire calibrated curve into a single average-bias summary, while $\mathrm{CVaR}^{\mathrm{cal}}_\alpha$ averages over the worst $\alpha$ fraction of scenarios under the same index-adjusted quantile curve.


\section{Application: LLM Fidelity Profiling}\label{sec4}

\subsection{Dataset and Methodology}

We evaluate our procedure on the \emph{WorldValueBench} dataset, curated
by~\cite{zhao2024worldvaluesbenchlargescalebenchmarkdataset} and
constructed from the World Values Survey~\citep{WVS_Wave7_2020}. The dataset comprises survey questions asked across 64 countries, covering 12 thematic categories (e.g., social values, security,
migration). Individual-level covariates are available for each respondent, including gender, age, migration status, education, marital status, and related demographics.  

\begin{figure}[ht!]
    \centering
    \includegraphics[width=0.9\linewidth]{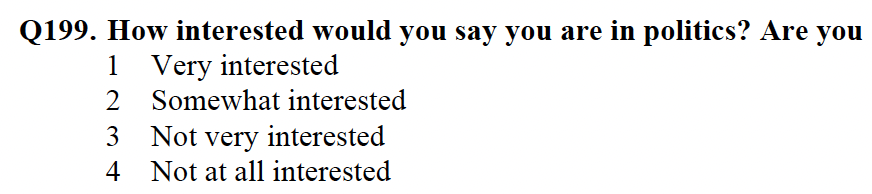}
    \caption{Example of World Value Question. Source: \cite{WVS_Wave7_2020}.}
    \label{fig:wvb_ex}
\end{figure}

\paragraph{Preprocessing.}
After data cleaning, we retain 235 distinct questions and responses from
96{,}220 individuals. Each question offers a categorical set of possible
answers, as illustrated in Figure \ref{fig:wvb_ex}. To place heterogeneous
answer sets on a common scale, we map each question's categories to the
interval $[-1,1]$, producing a bounded real-valued outcome for every scenario and respecting the natural ordering of response options when available (e.g., disagree–neutral–agree). Individual-level covariates are then used to construct synthetic profiles and corresponding prompts for simulators. Additional details on preprocessing and dataset
characteristics are provided in Appendix \ref{apx:wvb}.   

\paragraph{Methodology}
We generate synthetic responses and estimate $\{\hat q_j\}_{j=1}^{235}$
for four LLMs: \textsc{GPT-4o} (\texttt{gpt-4o}), \textsc{GPT-5\,mini}
(\texttt{gpt-5-mini}), \textsc{Llama 3.3 70B}
(\texttt{Llama-3.3-70B-Instruct-Turbo}), and \textsc{Qwen 3 235B}
(\texttt{Qwen3 235B A22B Thinking 2507 FP8}). As a benchmark, we also
construct a \emph{uniform} baseline: for each question, this generator
samples an answer uniformly at random from the available choices for each individual. We simulate $200$ LLM samples per question, and use these simulated responses to estimate $\hat q_j$. For humans, we subsample 500 respondents per question to calculate $\{\hat p_j\}_{j=1}^{235}$ and construct the corresponding
confidence sets $\mathcal C_j$. Due to nonresponses, the effective sample size $n_j$ varies across questions (typically 450–500). 

Since the outcome is bounded in $[-1,1]$, we construct per-scenario confidence sets $\mathcal C_j$ using Example~\ref{ex:bdd_cs}. We set $\gamma_j = 1 - n_j^{-1/3},$ which is increasing in sample size $n_j$\footnote{As discussed in Section~\ref{sec3}, the coverage level $\gamma_j=g(n_j)$ should increase with the sample size $n_j$. We therefore adopt a generic form $\gamma_j = 1 - n_j^{-\beta},$ and conduct a sensitivity analysis on $\beta$ value in    Appendix~\ref{apx:gamma_choice}.}, and yields an average coverage level of $\bar\gamma_m \approx 0.87$. The loss function $\Loss(p,q)=(p-q)^2$. Next, we use the procedure described in Section~\ref{sec3} (see~\eqref{eq:C-def}-\eqref{eq:Vhatm-def}) to obtain a fidelity profile for each candidate LLM, and plot the calibrated curve using the asymptotic index adjustment in \eqref{eq:asmptotic_guarantee}. For completeness, we also apply the same workflow to a Bernoulli setting (EEDI), multinomial setting (OpinionQA) and a continuous temperature prediction setting (UMAR dataset from NEST); see Appendix~\ref{apx:add_app}. 

\subsection{Fidelity Profiling}\label{subsec:fid_prof}

\begin{figure}[ht!]
    \centering
    \includegraphics[width=0.85\linewidth]{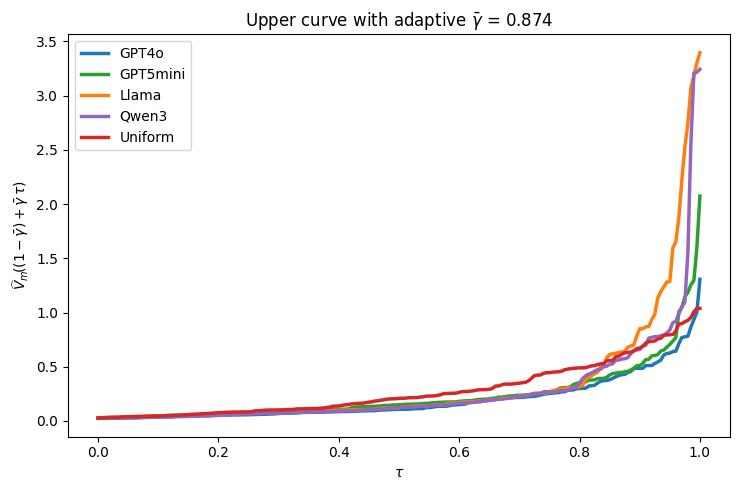}
    \caption{Calibrated discrepancy quantile curves $\hat V(\tau)$ across LLM simulators. Colors: \textbf{blue} = GPT-4o, \textbf{green} = GPT-5-MINI, \textbf{orange} = LLAMA 3.3, \textbf{purple} = QWEN-3, and \textbf{red} = the uniform baseline.}
    \label{fig:fidelity_profile_bddmean}
\end{figure}

The calibrated quantile functions for the four different LLMs are plotted in Figure~\ref{fig:fidelity_profile_bddmean}. Lower, flatter curves indicate uniformly small discrepancies, while elbows reveal rare but severe misses. Our results suggest that all LLM-simulators outperform the uniform benchmark on over $80\%$ of questions, but fail to capture the tail scenarios. \textsc{GPT-4o} is the lowest across all quantiles, indicating the most reliable alignment, with \textsc{GPT-5-mini} close behind yet failing to capture some outlier questions, and  \textsc{Llama 3.3} and \textsc{QWen-3} trail behind. While this comparison is useful, we want to emphasize that the uniform baseline is not intended to represent a universally poor simulator. For some near-balanced questions, it can return low sim-to-real gap. Its role here is instead to serve as a simple structure-free reference for illustrating how the proposed framework compares simulator fidelity at the scenario level.

\begin{figure}[ht!]
    \centering
    \includegraphics[width=1\linewidth]{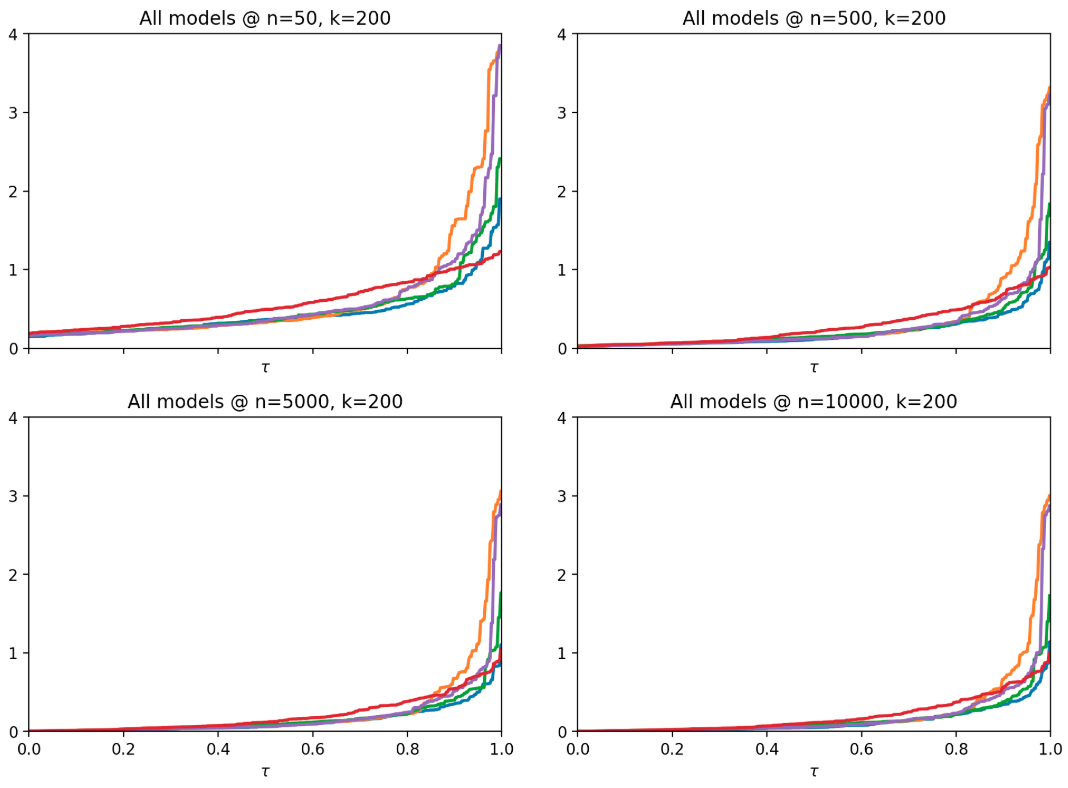}
    \caption{Robustness check of simulator performance under different human sample sizes. Panels correspond to $n_j=50$ (top left), $n_j=500$ (top right), $n_j=5000$ (bottom left), and $n_j=10000$ (bottom right), with $k=200$ fixed. Colors match Figure~3.}
    \label{fig:robust_check}
\end{figure}

We also provide a robustness check on whether the dominance relation is an anomaly of the particular choice for the sample size $n_j$. Figure~\ref{fig:robust_check} shows performance of the LLMs for $n_j = \{50, 500, 5000, 10000\}$. The results indicate that the dominance relation between LLMs remains the same.

\section{Tightness Analysis of Calibrated Quantiles}\label{sec5}

A natural concern with our approach is the tightness 
of our calibrated quantile curve. 
Since our construction relies on off-the-shelf concentration bounds and worst-case pseudo-discrepancies, the resulting calibrated quantiles could, in principle, be conservative to the point of being vacuous. To assess this empirically, we conduct a tightness study in a setting that leverages a feature of WorldValueBench.

We leverage the WorldValueBench benchmark, which contains up to $96{,}220$ respondents per question. We treat the empirical distribution $\hat p_j$ computed from all available respondents as a proxy for the true human response distribution $p_j$. This enables computation of an ``oracle'' discrepancy $ \Delta_j^\star := \Loss(p_j,\hat q_j),$ and hence an oracle quantile curve across questions. We fix the simulator budget at $k = 200$ and, for each question $j$, subsample $n_j \in \{50,500,1000,5000\}$ respondents to construct the confidence sets and pseudo-discrepancies $\{\hat\Delta_j\}_{j=1}^m$. We then compare the resulting calibrated curves against the oracle quantile curve. 

\begin{figure}[ht!]
    \centering
    \includegraphics[width=0.85\linewidth]{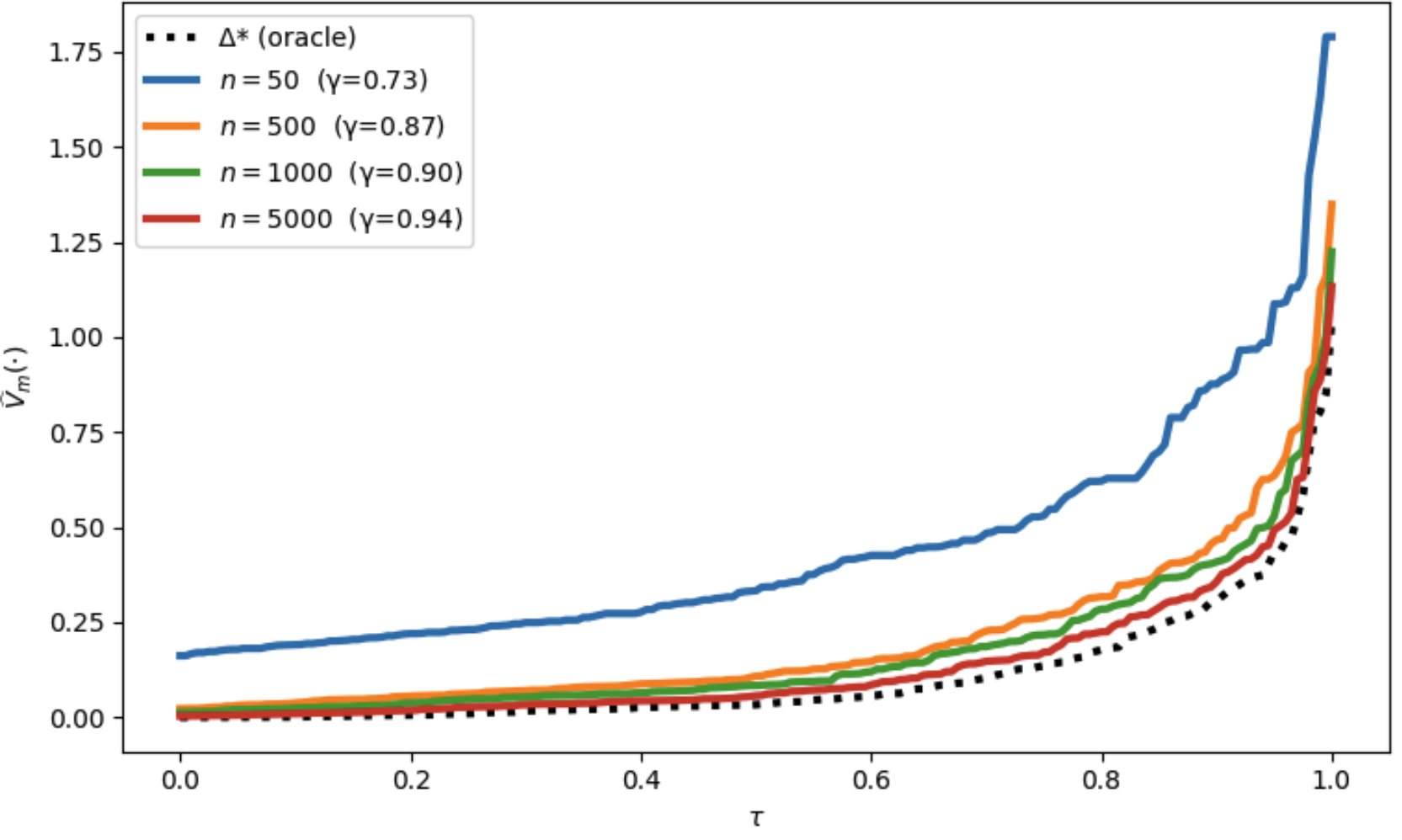}
    \caption{Tightness analysis for varying $n_j$ under GPT-4o. The dotted black curve is the oracle quantile curve $\Delta^\star$; the colored curves are calibrated curves computed from subsamples with $n_j=50$ (blue), $n_j=500$ (orange), $n_j=1000$ (green), and $n_j=5000$ (red). The corresponding adaptive average coverage levels shown in the legend are $\bar\gamma\approx 0.73, 0.87, 0.90,$ and $0.94$, respectively.}
    \label{fig:tightness}
\end{figure}

The results plotted in Figure~\ref{fig:tightness} indicate that the calibrated envelope is substantially conservative when $n_j = 50$, but becomes tighter when $n_j \geq 500$. This behavior is consistent with our construction: larger $n_j$ yields more concentrated estimates $\hat p_j$ and smaller confidence sets $\cC_j(\hat p_j,\gamma_j)$, which reduces the inflation induced by the supremum defining $\hat\Delta_j$. 

The results in Figure~\ref{fig:tightness} correspond to $\gamma_j = 1 - n_j^{-\frac{1}{3}}$. In order to isolate the impact of batch-size-specific $\gamma_j$, as we discussed in Section~\ref{sec3}, we repeat the same experiment by setting $\gamma_j\equiv\tfrac12$ for all questions. The corresponding results are plotted in Figure~\ref{fig:tightness_fixed}. 
The plots here
exhibit a persistent vertical separation between the oracle quantile curve and the calibrated curves, even for large $n_j$. This phenomenon is explained by the
fact that here 
$\bar\gamma_m = \tfrac12$ (independent of $n_j$) 
and the asymptotic calibration evaluates the pseudo-quantile at $\hat V_m \big(0.5\tau + 0.5\big)$. Consequently, at lower quantile levels ($\tau \approx 0$), our procedure approximates the quantile function by at least the median, inducing a non-vanishing conservativeness that cannot be eliminated by increasing $n_j$. 
This underscores the importance of allowing scenario-dependent coverage levels. 

\begin{figure}[ht!]
    \centering
    \includegraphics[width=0.85\linewidth]{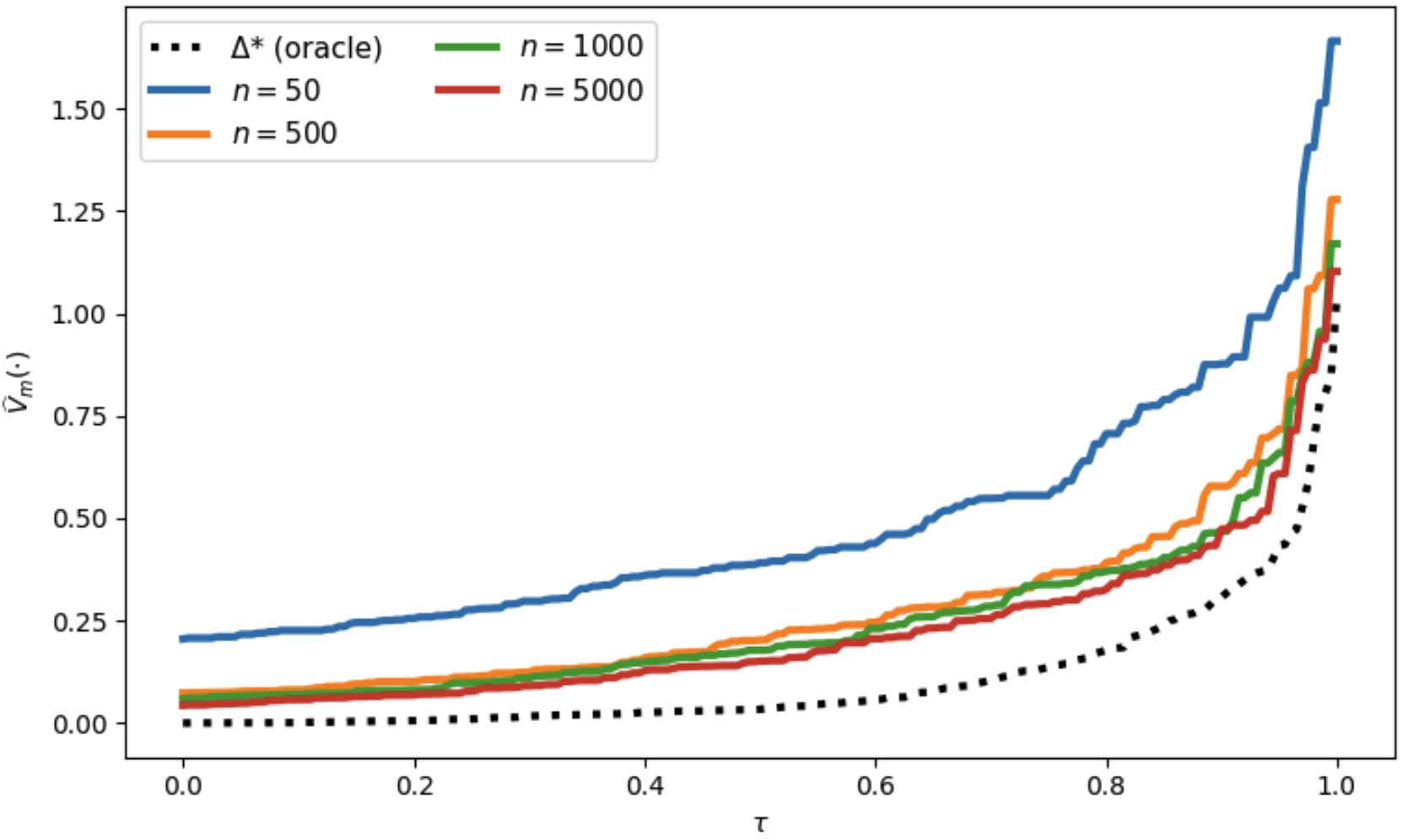}
    \caption{Tightness analysis for varying $n_j$ under GPT-4o with fixed $\gamma_j=\tfrac{1}{2}$. Colors match Figure~\ref{fig:tightness}.}
    \label{fig:tightness_fixed}
\end{figure}

While WorldValueBench provides a large pool of respondents that facilitates an oracle tightness check, most applications do not enjoy such extensive ground-truth data. We next develop a complementary tightness assessment: a confidence band that sandwiches the oracle quantile curve.

\begin{theorem}[Informal Version]\label{thm:band_informal}
  Suppose the setup of Theorem~\ref{thm:general} and Assumptions~\ref{ass:indep}--\ref{ass:discr} hold. Let $\{\gamma_{L,j}, \gamma_{U,j} \}_{j=1}^m\subset(0,1]$ be coverage levels and define the lower and upper pseudo-discrepancies 
  \[
    \hat \Delta_j^- := \inf_{u\in \cC_j(\hat p_j,\gamma_{L,j})} \Loss(u, \hat q_j), \ \hat \Delta_j^+ := \sup_{u\in \cC_j(\hat p_j,\gamma_{U,j})} \Loss(u, \hat q_j),
  \]
  where $\cC_j(\hat p_j,\gamma)\subset\Theta$ 
  are $\gamma$-confidence sets. 
  Let $\hat V_m^-(\cdot)$ (resp. $\hat{V}^+(\alpha)$) denote the empirical
  quantile function of $\{\hat\Delta_j^-\}_{j=1}^m$ (
  resp. $\{\hat\Delta_j^+\}_{j=1}^m$). 
 Let $V:[0,1]\to\mathbb R$ denote the quantile function of $\Delta(\psi)$. Then, as $m \to \infty$,  for any fixed $\tau\in(0,1)$, 
  \begin{equation}\label{eq:band-quantile-asymp-gammaj}
    \hat V_m^-\big(\bar\gamma_{L,m}\tau\big) \lesssim V(\tau) \lesssim \hat V_m^+\big(\bar\gamma_{U,m} \tau+(1-\bar\gamma_{U,m}) \big),
  \end{equation}
  where 
  $\bar\gamma_{L,m} = \tfrac{1}{m} \sum_{j=1}^m \gamma_{L,j}$, and
  $\bar\gamma_{U,m} = \tfrac{1}{m}\sum_{j=1}^m \gamma_{U,j}$.
\end{theorem}

\begin{figure}[ht!]
    \centering
    \includegraphics[width=0.9\linewidth]{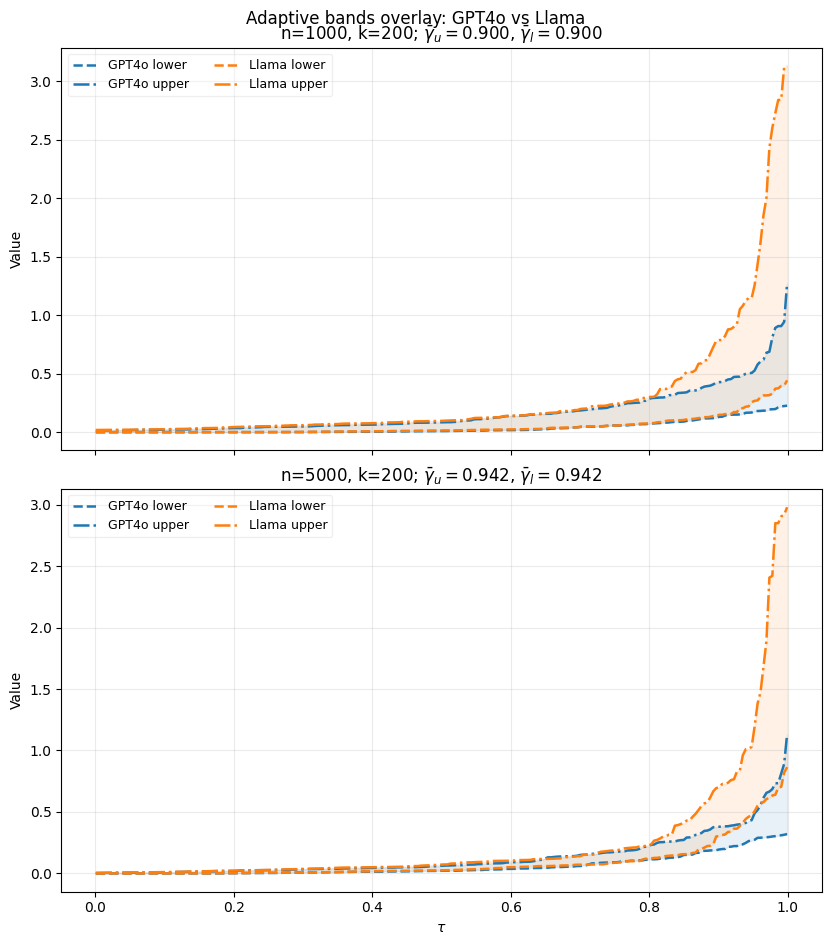}
    \caption{Confidence bands for GPT-4o and LLAMA 3.3. The top panel uses $n_j=1000$, and $\bar\gamma_L=\bar\gamma_U=0.900$; the bottom panel uses $n_j=5000$, and $\bar\gamma_L=\bar\gamma_U=0.942$, both with $k=200$. In each panel, blue curves denote the lower and upper bands for GPT-4o, and orange curves for LLAMA 3.3 respectively.}
   \label{fig:cb_ex}
\end{figure}

The formal version of this result is presented in Theorem~\ref{thm:band} and its proof is deferred to Appendix~\ref{pf:band}. The takeaway is that Theorems~\ref{thm:general} and~\ref{thm:band} provide complementary tools for assessing simulator fidelity at the distributional level. Theorem~\ref{thm:general} provides a one-sided, calibrated upper envelope for the population quantile curve, while Theorem~\ref{thm:band} augments this with a lower envelope, producing a confidence band that sandwiches the unknown quantile function $V(\tau)$. Figure~\ref{fig:cb_ex} illustrates these bands for two subsample sizes $n_j \in \{1000, 5000\}$ and compares \textsc{GPT-4o} against \textsc{Llama 3.3}. Across both panels, the \textsc{GPT-4o} band lies systematically below the \textsc{Llama 3.3} band over most quantile levels, with the separation becoming most pronounced in the upper tail. This pattern indicates that \textsc{GPT-4o} achieves systematically smaller calibrated discrepancy across scenarios and exhibits fewer extreme failures, and the conclusion is stable as $n_j$ increases. Importantly, the comparison is fully data-driven: unlike oracle evaluations that rely on near-population human samples, the band construction does not require access to ground-truth discrepancies and can be carried out using only the observed heterogeneous batches.


\section{Discussion}\label{sec6}

We develop a model-agnostic framework for profiling simulator fidelity through the quantile function of the sim-to-real discrepancy. Our approach makes no parametric assumptions about either the simulator or the real system, rigorously accounts for finite and heterogeneous sampling across scenarios, and yields calibrated, finite-sample guarantees at any target quantile level. The resulting fidelity profile supports a range of downstream summaries, from average performance to tail-risk measures that isolate rare but consequential failures. Finally, we demonstrate the practical value of the framework on real survey data, evaluating multiple LLM-based simulators and illustrating systematic, distribution-level differences in their alignment with human populations.

Despite its broad applicability, the method we introduce leaves several natural extensions. First, our framework targets \emph{scenario-level population discrepancy}: for each scenario, the object of interest is a marginal population-level response distribution or derived parameter. As a result, the current formulation does not model respondent-level joint dependence across multiple questions. Second, our guarantees are developed for static per-scenario outputs under an i.i.d.\ scenario assumption, and do not yet cover fully dynamic simulators with temporal dependence, endogenous sampling, or distribution shift. Third, on the computational side, the main bottleneck is simulator-side generation: once real and simulated batches are available, our methodology computationally is lightweight, but expensive simulators may make sample generation itself the limiting factor. Finally, because the method relies on generic concentration inequalities and worst-case pseudo-discrepancies, the resulting envelope can be conservative when the number of scenarios $m$ or the per-scenario sample sizes are small. In particular, our proof relies on DKW-type concentration---which can be loose for small $m$---together with a grid-uniform step that further weakens constants, so tightening these bounds is an immediate priority.

\section*{Acknowledgement}
Kaizheng Wang's research is supported by NSF grants DMS-2210907 and DMS-2515679 and a Data Science Institute seed grant SF-181 at Columbia University.

\clearpage

{
\bibliographystyle{ims}
\bibliography{ref}
}
\appendix

\section{Technical Lemmas}\label{apx:lemma}

\begin{lemma}(Dvoretzky-Kiefer-Wolfowitz (DKW) Inequality via \cite{massart1990tight})\label{dkw_in}

    Let \( X_1, X_2, \dots, X_n \) be i.i.d. real-valued random variables with cumulative distribution function (CDF) \( F^* \), and let \( \hat{F}_n \) be the empirical distribution function defined by
\[
\hat{F}_n(x) := \frac{1}{n} \sum_{i=1}^n \mathbf{1}\{X_i \leq x\}.
\]

Then, for any \( \varepsilon > 0 \),
\[
\mathbb{P} \left( \sup_{x \in \mathbb{R}} \left| \hat{F}_n(x) - F^*(x) \right| > \varepsilon \right)
\leq 2 e^{-2n \varepsilon^2}.
\]

Equivalently, for any confidence level \( \delta \in (0,1) \), with probability at least \( 1 - \delta \),
\[
\sup_{x \in \mathbb{R}} \left| \hat{F}_n(x) - F^*(x) \right| \leq \sqrt{ \frac{1}{2n} \log\left( \frac{2}{\delta} \right) }.
\]
\end{lemma}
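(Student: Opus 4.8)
The plan is to follow Massart's argument, whose skeleton has three parts: reduce to uniform data, pass to a one-sided maximal inequality, and prove the sharp one-sided bound by a boundary-crossing decomposition. First I would reduce to the case where $F^*$ is the uniform distribution on $[0,1]$: applying the (generalized) probability integral transform, the variables $U_i := F^*(X_i)$ are stochastically dominated by, and when $F^*$ is continuous equal in law to, an i.i.d.\ $\mathrm{Uniform}[0,1]$ sample, and one checks that $\sup_x|\hat F_n(x)-F^*(x)|$ is then dominated by $\sup_{t\in[0,1]}|G_n(t)-t|$, where $G_n$ is the empirical CDF of the $U_i$; atoms in $F^*$ only shrink the left-hand side, so it suffices to treat the uniform case. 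Next I would note that the claimed bound $2e^{-2n\varepsilon^2}$ is vacuous unless $\varepsilon\ge\sqrt{(\log 2)/(2n)}$, so we may restrict to that range, and that by a union bound over $\{\sup_t(G_n(t)-t)>\varepsilon\}$ and $\{\sup_t(t-G_n(t))>\varepsilon\}$ --- which are equiprobable by the symmetry $U_i\mapsto 1-U_i$ --- it suffices to establish the one-sided inequality $\mathbb{P}\big(\sup_{t}(G_n(t)-t)>\varepsilon\big)\le e^{-2n\varepsilon^2}$.

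For the one-sided bound the plan is to use the order statistics $U_{(1)}\le\cdots\le U_{(n)}$ together with the identity $\sup_t(G_n(t)-t)=\max_{1\le k\le n}\big(k/n-U_{(k)}\big)$, and to decompose the event $\{\max_k(k/n-U_{(k)})>\varepsilon\}$ according to the \emph{largest} index $k$ at which $U_{(k)}<k/n-\varepsilon$, i.e.\ the last crossing of the line $t\mapsto t+\varepsilon$ by $G_n$. Conditioning on that index and on $U_{(k)}$, the probability factors into a binomial-tail term (exactly $k$ of the $n$ points fall below the crossing level) times the conditional probability that the remaining $n-k$ points, all lying above $U_{(k)}$, never cross the boundary again; the second factor is a rescaled instance of the same quantity, producing a recursion in $n$. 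Summing over $k$ and bounding the recursion --- the Dvoretzky-Kiefer-Wolfowitz combinatorial lemma with Massart's sharpened comparison function --- collapses the sum to $e^{-2n\varepsilon^2}$; symmetrizing then yields the two-sided bound with constant $2$, and the ``equivalently'' reformulation follows by setting $2e^{-2n\varepsilon^2}=\delta$ and solving for $\varepsilon$.

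I expect the sharp constant in the one-sided bound to be the only real obstacle. The soft part is routine: since $nG_n(t)\sim\mathrm{Binomial}(n,t)$, Hoeffding's inequality already gives $\mathbb{P}(G_n(t)-t>\varepsilon)\le e^{-2n\varepsilon^2}$ for each fixed $t$, and a peeling or grid argument over $t$ upgrades this to the supremum at the cost of an extra factor polynomial in $n$. Removing that factor --- showing that the continuum of thresholds costs nothing in the exponent and only the factor $2$ out front after symmetrization --- is exactly the content of Massart's theorem, and it is where the boundary-crossing decomposition and the delicate bound on the resulting recursion are indispensable; the alternative route via comparison with a Brownian bridge plus a finite-$n$ correction is no shorter. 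I would therefore implement Massart's decomposition-plus-recursion argument in full and treat everything preceding it as bookkeeping.
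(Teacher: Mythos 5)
The paper does not prove this lemma; it states it in Appendix A and cites \cite{massart1990tight}, and then uses it as a black-box tool in the proofs of Theorems~\ref{thm:general}, \ref{thm:pairwise-quantile}, and~\ref{thm:band}. So there is no in-paper proof to compare against.

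Your write-up is a faithful table of contents for Massart's argument, but it is not a proof. The scaffolding you set up is correct: reduction to the uniform case via the quantile representation $X_i = F^{*-1}(U_i)$ and the identity $\hat F_n(x) = G_n\!\left(F^*(x)\right)$, which gives $\sup_x|\hat F_n(x)-F^*(x)| \le \sup_{t\in[0,1]}|G_n(t)-t|$; the observation that the two-sided bound is vacuous for $\varepsilon < \sqrt{(\log 2)/(2n)}$, so one may assume $\varepsilon \ge \sqrt{(\log 2)/(2n)}$; the symmetrization to the one-sided claim $\mathbb{P}\!\left(\sup_t(G_n(t)-t) > \varepsilon\right) \le e^{-2n\varepsilon^2}$ via $U_i \mapsto 1-U_i$ and a union bound; and the reduction of $\sup_t(G_n(t)-t)$ to $\max_k(k/n - U_{(k)})$ followed by a boundary-crossing decomposition. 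The genuine gap is the single step on which the whole result hangs: you assert that the last-crossing decomposition ``collapses the sum to $e^{-2n\varepsilon^2}$'' but give no argument. That is precisely Massart's contribution --- starting from the Birnbaum--Tingey/Smirnov exact formula for $\mathbb{P}(D_n^+ \ge \varepsilon)$, he bounds the resulting binomial-type sum term by term against a carefully chosen comparison function, and that estimate (Lemma~2 and the surrounding analysis in his paper) is where all the difficulty of removing the polynomial-in-$n$ slack from the Hoeffding-plus-grid bound resides. Without it you have only the weak DKW bound with a worse constant, which you yourself concede. A smaller slip: $F^*(X_i)$ stochastically \emph{dominates} $\mathrm{Uniform}[0,1]$ (i.e., $\mathbb{P}(F^*(X)\le u) \le u$), not the reverse; in any case the stochastic-order remark is not what does the work --- the sup-norm comparison above is.
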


\begin{lemma}\label{lem:add-chernoff}[Hoeffding (additive) via \cite{Hoeffding01031963}]

Let $Z_1,\dots,Z_n\in[0,1]$ be independent, $T=\sum_{i=1}^n Z_i$, and $\mu=\mathbb{E}[T]$.
For any $t\in[0,\mu]$,
\[
\mP \big(T \le \mu - t\big) \le 
\exp\left(-\frac{2t^2}{\sum_{i=1}^n (1-0)^2}\right)
=\exp \left(-\,\frac{2t^2}{n}\right).
\]
\end{lemma}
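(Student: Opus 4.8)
The plan is to prove the lower-tail bound by the exponential (Chernoff) method, reducing the whole claim to a moment-generating-function estimate for a single bounded variable.

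First I would fix an arbitrary $s>0$ and apply Markov's inequality to $e^{-sT}$, writing
\[
  \mP\big(T\le\mu-t\big)=\mP\big(e^{-sT}\ge e^{-s(\mu-t)}\big)\le e^{s(\mu-t)}\,\mE\big[e^{-sT}\big].
\]
By independence of the $Z_i$, $\mE[e^{-sT}]=\prod_{i=1}^n\mE[e^{-sZ_i}]$, so it suffices to control each factor separately.

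Second, I would establish Hoeffding's lemma: for any random variable $X$ with values in $[a,b]$ and any $s\in\R$, $\mE[e^{s(X-\mE X)}]\le \exp\!\big(s^2(b-a)^2/8\big)$. The cleanest route is to set $\psi(s):=\log\mE[e^{s(X-\mE X)}]$, observe $\psi(0)=\psi'(0)=0$, and note that $\psi''(s)$ is the variance of $X$ under the exponentially tilted law $d\mathbb{Q}_s\propto e^{sX}\,d\mathbb{P}$, which is still supported on $[a,b]$; since any $[a,b]$-valued random variable has variance at most $(b-a)^2/4$ (Popoviciu's inequality, or directly $\mathrm{Var}(X)\le\mE[(X-\tfrac{a+b}{2})^2]\le(\tfrac{b-a}{2})^2$), a second-order Taylor expansion of $\psi$ with integral remainder gives $\psi(s)\le s^2(b-a)^2/8$. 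Applying this with $a=0$, $b=1$ to each $Z_i$ yields $\mE[e^{-sZ_i}]\le e^{-s\mE Z_i}\,e^{s^2/8}$, hence $\mE[e^{-sT}]\le e^{-s\mu}e^{ns^2/8}$. Substituting back gives, for every $s>0$,
\[
  \mP\big(T\le\mu-t\big)\le \exp\!\Big(-st+\tfrac{n s^2}{8}\Big),
\]
and I would then minimize the exponent over $s$: the minimizer is $s^\star=4t/n>0$, which is admissible, and plugging it in produces exponent $-2t^2/n$, i.e.\ the stated bound $\exp(-2t^2/n)=\exp(-2t^2/\sum_i(1-0)^2)$. The hypothesis $t\in[0,\mu]$ is only used to make $\{T\le\mu-t\}$ a genuine lower-tail event (for $t>\mu$ the event is empty since $T\ge0$, so the inequality is trivial).

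The main obstacle — really the only nontrivial ingredient — is Hoeffding's lemma, and within it the uniform bound $\psi''(s)\le(b-a)^2/4$; the rest is bookkeeping around the Chernoff optimization. If one prefers to avoid the tilted-measure/calculus argument, an equivalent fallback is to use the convexity bound $e^{sx}\le\frac{b-x}{b-a}e^{sa}+\frac{x-a}{b-a}e^{sb}$ valid for $x\in[a,b]$, take expectations to bound $\mE[e^{s(X-\mE X)}]$ by an explicit function of $s$, and then dominate that function by its own second-order Taylor remainder; I would present whichever version is shortest.
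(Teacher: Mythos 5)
Your proof is correct and is the canonical Chernoff-plus-Hoeffding's-lemma derivation; the algebra checks out (the optimizer $s^\star=4t/n$ gives exponent $-2t^2/n$ as claimed). The paper itself does not reproduce a proof for this lemma but simply cites \citet{Hoeffding01031963}, so there is no competing argument to compare against; your write-up is exactly the standard one the citation points to.
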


\begin{lemma}\label{lem:CH_exp}[Chernoff--Hoeffding for one-parameter exponential family]

Let \(X_1,\dots,X_n\) be i.i.d. with density (or mass) in the one-parameter canonical exponential family
\[
p_\theta(x) \;=\; \exp\{\theta T(x) - A(\theta)\}\,h(x),
\qquad \theta\in\Theta,
\]
where \(T(x)\) is the sufficient statistic, \(A(\theta)\) is the log-partition function (convex, differentiable on \(\Theta\)), and the mean map is \(\mu(\theta):= \mathbb{E}_{\theta}[T(X)] = A'(\theta)\). Assume \(\Theta\) is an open interval and all quantities below are finite.

Define the empirical mean of the sufficient statistic
\[
\bar T_n \;=\; \frac{1}{n}\sum_{i=1}^n T(X_i).
\]
For each \(t\) in the range of \(\bar T_n\) let \(\theta_t\) be the (unique) canonical parameter satisfying \(\mu(\theta_t)=t\). 
Then for any \(\varepsilon>0\),
\begin{align*}
    &\mP \bigl( D(p_{\theta_{\bar T_n}}\| p_\theta) > \varepsilon \bigr) \le 2 e^{-n\varepsilon}
\end{align*}
\end{lemma}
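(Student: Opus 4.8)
The plan is to reduce the KL-divergence tail event to a pair of one-sided Chernoff (exponential Markov) bounds on the empirical mean $\bar T_n$ of the sufficient statistic, using the classical Legendre duality between the log-partition function $A$ and the large-deviation rate function of $T(X_1)$ under $p_\theta$.

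\textbf{Step 1: Bregman / rate-function identity.} First I would record the explicit form of the KL divergence inside the family: for $\theta_1,\theta_2\in\Theta$, plugging the density ratio into $\mathbb E_{\theta_1}[\log(p_{\theta_1}/p_{\theta_2})]$ and using $\mathbb E_{\theta_1}[T(X)]=\mu(\theta_1)$ gives
\[
D(p_{\theta_1}\|p_{\theta_2}) = (\theta_1-\theta_2)\,\mu(\theta_1) - A(\theta_1) + A(\theta_2).
\]
Now fix the true parameter $\theta$ and let $\Lambda(\lambda):=\log\mathbb E_\theta[e^{\lambda T(X)}]=A(\theta+\lambda)-A(\theta)$ be the cumulant generating function of $T(X_1)$. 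Its Legendre transform is $\Lambda^*(t)=\sup_\lambda\{\lambda t-\Lambda(\lambda)\}$; the stationarity condition reads $t=A'(\theta+\lambda)=\mu(\theta+\lambda)$, solved by $\lambda^\star=\theta_t-\theta$, and substituting back yields the key identity $\Lambda^*(t)=D(p_{\theta_t}\|p_\theta)$ (the same Bregman expression with $\theta_1=\theta_t$, $\theta_2=\theta$). Since $A$ is strictly convex on the open interval $\Theta$ and $\mu$ is strictly increasing, the map $t\mapsto D(p_{\theta_t}\|p_\theta)$ is continuous, vanishes only at $t=\mu(\theta)$, is strictly increasing on $t\ge\mu(\theta)$ and strictly decreasing on $t\le\mu(\theta)$.

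\textbf{Steps 2--3: one-sided Chernoff bounds and splitting the event.} Applying the standard exponential Markov inequality to the i.i.d.\ sum $n\bar T_n$ gives, for every $t\ge\mu(\theta)$, $\mP(\bar T_n\ge t)\le e^{-n\Lambda^*(t)}=e^{-nD(p_{\theta_t}\|p_\theta)}$, and symmetrically $\mP(\bar T_n\le t)\le e^{-nD(p_{\theta_t}\|p_\theta)}$ for $t\le\mu(\theta)$. Since $\mu$ maps $\Theta$ bijectively onto the interior of the convex hull of the support of $T$, the fitted parameter $\theta_{\bar T_n}$ is well defined on the range of $\bar T_n$, and by the unimodality above $\{D(p_{\theta_{\bar T_n}}\|p_\theta)>\varepsilon\}$ holds iff $\bar T_n$ falls outside $[t_-,t_+]$, where $t_-\le\mu(\theta)\le t_+$ are the at most two solutions of $D(p_{\theta_t}\|p_\theta)=\varepsilon$ (each existing only if $\varepsilon$ does not exceed the corresponding one-sided limit of the rate function, in which case that tail event is simply empty). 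Combining,
\begin{align*}
\mP\big(D(p_{\theta_{\bar T_n}}\|p_\theta)>\varepsilon\big)
&\le \mP(\bar T_n\ge t_+)+\mP(\bar T_n\le t_-)\\
&\le e^{-nD(p_{\theta_{t_+}}\|p_\theta)}+e^{-nD(p_{\theta_{t_-}}\|p_\theta)}=2e^{-n\varepsilon},
\end{align*}
which is the claimed bound, and improves to $e^{-n\varepsilon}$ when one side is vacuous.

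\textbf{Main obstacle.} The computation is routine once the $\Lambda^*=D(p_{\theta_\cdot}\|p_\theta)$ identity is in hand; the only delicate points are regularity bookkeeping. I would need to (i) justify that $\theta_{\bar T_n}$ is well defined by restricting to the range of $\bar T_n$, exactly as the statement does; (ii) handle boundary cases in which the range of $\mu$ (equivalently the support hull of $T$) is bounded, so that $D(p_{\theta_t}\|p_\theta)$ may stay below $\varepsilon$ on one or both sides — there the corresponding tail event is empty and the inequality only sharpens; and (iii) record that strict convexity of $A$ on the open interval $\Theta$ is precisely what forces the rate function to be unimodal with a unique zero, which is what licenses the clean two-sided split. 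No new idea beyond these regularity checks is required.
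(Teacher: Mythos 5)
Your proposal is correct and follows essentially the same route as the paper: compute the cumulant generating function $\Lambda(\lambda)=A(\theta+\lambda)-A(\theta)$, identify its Legendre transform with the in-family KL divergence $\Lambda^*(t)=D(p_{\theta_t}\|p_\theta)$, apply the one-sided exponential-Markov (Chernoff) bound on each tail of $\bar T_n$, use strict convexity of $A$ to argue that the superlevel set $\{t:D(p_{\theta_t}\|p_\theta)\ge\varepsilon\}$ is a union of two rays, and union-bound to get $2e^{-n\varepsilon}$. The only (minor) difference is that you explicitly flag the degenerate case where one threshold $t_\pm$ does not exist so that tail is vacuous, which the paper glosses over; this is a harmless strengthening.
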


\textit{Proof:}

Define the shifted log-MGF under $p_\theta$,
\[
\sce_\theta(\lambda):=\log\mathbb{E}_\theta\big[e^{\lambda T(X)}\big] = A(\theta+\lambda)-A(\theta),
\]
where the displayed equality follows from the exponential-family form (for $\lambda$ in the domain where the expectation is finite). For an attainable mean $m$ denote $\theta_m$ as the unique solution of $\mu(\theta_m)=m$.

By adopting the one-sided Chernoff bound, for any real $\lambda$ such that expectations exist and any $m\in\mathbb{R}$,
\begin{align*}
    \mathbb{P}_\theta(\bar T_n\ge m) &= \mathbb{P}\big(e^{\lambda n\bar T_n}\ge e^{\lambda n m}\big) \\   
    & \le e^{-\lambda n m}\,\mathbb{E}_\theta\!\big[e^{\lambda n\bar T_n}\big] \\
    &= \exp\big(-n(\lambda m-\sce_\theta(\lambda))\big).
\end{align*}
Optimizing over $\lambda$ gives the Chernoff bound
\begin{align*}
    &\mathbb{P}_\theta(\bar T_n\ge m)\le \exp \big(-n\sce_\theta^*(m)\big),  \implies \sce_\theta^*(m):=\sup_{\lambda\in\mathbb{R}}\{\lambda m-\sce_\theta(\lambda)\},
\end{align*}
where $\sce_\theta^*(m)$ is the Fenchel-Legendre transform of log-MGF. A symmetric argument with $\lambda<0$ yields the lower-tail bound
\[
\mathbb{P}_\theta(\bar T_n\le m)\le \exp\!\big(-n\sce_\theta^*(m)\big).
\]

We next link the Fenchel-Legendre transform to KL-Divergence using exponential tilting. First, adopting change of variables $\eta=\theta+\lambda$. Then
\begin{align*}
    \sce_\theta^*(m)
&= \sup_{\eta}\{\langle\eta-\theta,m\rangle - (A(\eta)-A(\theta))\} \\
&= A(\theta)+A^*(m)-\langle\theta,m\rangle,
\end{align*}
where $A^*(m)=\sup_{\eta}\{\langle\eta,m\rangle-A(\eta)\}$ is the convex conjugate of $A$. When $m$ is attainable, the supremum is achieved at $\eta=\theta_m$, and therefore
\[
\sce_\theta^*(m) = \langle\theta_m-\theta,m\rangle - (A(\theta_m)-A(\theta)).
\]

But for exponential-family densities one has the following direct algebraic identity for the KL:
\[
\begin{aligned}
D\big(p_{\theta_1}\|p_{\theta_2}\big)
&= \mathbb E_{\theta_1}\Big[\log\frac{p_{\theta_1}(X)}{p_{\theta_2}(X)}\Big] \\
&= \mathbb E_{\theta_1}\big[(\theta_1-\theta_2)T(X) - (A(\theta_1)-A(\theta_2))\big]\\
&= (\theta_1-\theta_2)\,\mathbb E_{\theta_1}[T(X)] - \big(A(\theta_1)-A(\theta_2)\big).
\end{aligned}
\]

Taking $\theta_1=\theta_m$ and $\theta_2=\theta$ (so $\mathbb E_{\theta_1}[T]=m$) yields
\[
\sce_\theta^*(m)=D\big(p_{\theta_m}\|p_\theta\big).
\]

Combining this with the one-sided Chernoff-bound above yields the one-sided KL-form Chernoff bounds
\begin{align*}
    &\mathbb{P}_\theta(\bar T_n\ge m)\le e^{-nD\bigl(p_{\theta_m}\|p_\theta\bigr)}, \\
    &\mathbb{P}_\theta(\bar T_n\le m)\le e^{-nD\bigl(p_{\theta_m}\|p_\theta\bigr)}.
\end{align*}
Fix $\varepsilon>0$. Because $A''>0$ the function $m\mapsto D(p_{\theta_m}\|p_\theta)$ is continuous, strictly convex and has a unique minimum $0$ at $m=A'(\theta)$. Thus the sublevel set $\{m:D(p_{\theta_m}\|p_\theta)<\varepsilon\}$ is an open interval $(m_-,m_+)$; equivalently
\[
\{m:D(p_{\theta_m}\|p_\theta)\ge\varepsilon\}=(-\infty,m_-]\cup[m_+,\infty).
\]
Hence
\[
\{D(p_{\theta_{\overline T_n}}\|p_\theta)\ge\varepsilon\}
\subseteq \{\overline T_n\le m_-\}\cup\{\overline T_n\ge m_+\}.
\]
Applying the one-sided KL bounds at $m_\pm$ (each equals $\varepsilon$) and using the union bound gives
\[
\mP \bigl(D(p_{\theta_{\overline T_n}}\|p_\theta)\ge\varepsilon\bigr) \le e^{-n\varepsilon}+e^{-n\varepsilon}=2e^{-n\varepsilon},
\]
which proves the lemma.

\begin{lemma} (Chernoff-Hoeffding Inequality)\label{CH-bin}

Let \( X_1, \ldots, X_n \sim \text{Ber}(\tilde{p}) \) be i.i.d. Bernoulli random variables with unknown mean \( \tilde{p} \), and define the empirical mean as
\[
\bar{X}_n = \frac{1}{n} \sum_{i=1}^n X_i.
\]

Then for any \( \varepsilon > 0 \),
\[
\mathbb{P} \left( D\left( \bar{X}_n \,\|\, \tilde{p} \right) > \varepsilon \right) \leq 2 e^{-n \varepsilon},
\]
where \( D(p \,\|\, q) \) is the Kullback-Leibler divergence between Bernoulli distributions with parameters \( p \) and \( q \).    
\end{lemma}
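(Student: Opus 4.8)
The plan is to obtain this as a direct specialization of Lemma~\ref{lem:CH_exp} to the Bernoulli family. First I would verify that $\Ber(\tilde p)$ with $\tilde p\in(0,1)$ is a one-parameter canonical exponential family: writing the mass function as $p_\theta(x)=\exp\{\theta x-\log(1+e^\theta)\}$ for $x\in\{0,1\}$, we read off sufficient statistic $T(x)=x$, canonical parameter $\theta=\log\frac{\tilde p}{1-\tilde p}$ (the logit), log-partition $A(\theta)=\log(1+e^\theta)$, and carrier $h\equiv1$, with $\Theta=\mathbb{R}$ open and $A$ smooth and strictly convex since $A''(\theta)=\frac{e^\theta}{(1+e^\theta)^2}>0$. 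The mean map is $\mu(\theta)=A'(\theta)=\frac{e^\theta}{1+e^\theta}=\tilde p$, so the empirical mean of the sufficient statistic is exactly $\bar T_n=\bar X_n$.

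Second, I would identify the exponential-family KL appearing in Lemma~\ref{lem:CH_exp} with the Bernoulli divergence used here. For an attainable interior mean $m\in(0,1)$, the algebraic identity established inside the proof of Lemma~\ref{lem:CH_exp}, namely $D(p_{\theta_m}\|p_\theta)=(\theta_m-\theta)m-(A(\theta_m)-A(\theta))$, expands after substituting $\theta_m=\log\frac{m}{1-m}$ and $\theta=\log\frac{\tilde p}{1-\tilde p}$ to $m\log\frac{m}{\tilde p}+(1-m)\log\frac{1-m}{1-\tilde p}$, which is precisely the Bernoulli KL $D(m\|\tilde p)$ in the lemma's notation. Hence Lemma~\ref{lem:CH_exp} applied to this family reads $\mP(D(\bar X_n\|\tilde p)>\varepsilon)\le 2e^{-n\varepsilon}$ whenever $\bar X_n$ takes an interior value.

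The one point requiring a little care — the closest thing to an obstacle — is the boundary: when $\bar X_n=0$ or $\bar X_n=1$ the corresponding canonical parameter is $\mp\infty$ and is not covered by the open-interval hypothesis of Lemma~\ref{lem:CH_exp}, yet $D(0\|\tilde p)=\log\frac{1}{1-\tilde p}$ and $D(1\|\tilde p)=\log\frac{1}{\tilde p}$ are finite. I would handle this exactly as in the proof of Lemma~\ref{lem:CH_exp}: the sublevel set $\{m:D(m\|\tilde p)<\varepsilon\}$ is an open subinterval $(m_-,m_+)\subseteq(0,1)$ containing $\tilde p$ (with $m_-=0$ or $m_+=1$ permitted when $\varepsilon$ exceeds the corresponding endpoint value), so $\{D(\bar X_n\|\tilde p)\ge\varepsilon\}\subseteq\{\bar X_n\le m_-\}\cup\{\bar X_n\ge m_+\}$; the one-sided Chernoff bounds from that proof, applied at the interior thresholds $m_\pm$, give $\mP(\bar X_n\le m_-)\le e^{-nD(m_-\|\tilde p)}\le e^{-n\varepsilon}$ and symmetrically for the upper tail (an endpoint equal to $0$ or $1$ simply makes the corresponding tail event empty). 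A union bound then yields $2e^{-n\varepsilon}$, completing the argument. If one prefers not to re-invoke the one-sided bounds, an alternative is a short continuity argument: approximate a degenerate empirical value by interior means and pass to the limit in the already-established interior inequality, using that $m\mapsto D(m\|\tilde p)$ is continuous on $[0,1]$.
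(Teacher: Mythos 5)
Your proposal is correct and follows exactly the route the paper takes: the paper's proof of Lemma~\ref{CH-bin} is the one-line ``Via Lemma~\ref{lem:CH_exp},'' and you have simply spelled out the specialization to the Bernoulli canonical exponential family that this citation implicitly relies on. Your extra care at the boundary values $\bar X_n\in\{0,1\}$ is a genuine detail the paper glosses over, but it does not change the argument; both proofs are the same application of Lemma~\ref{lem:CH_exp}.
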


\textit{Proof:}
Via Lemma~\ref{lem:CH_exp}.

\begin{lemma}[Multinomial Chernoff-Hoeffding Bound via \cite{concentration_bdd_mn}]\label{lem:kl-tail-tight}
For all $d \leq (\frac{nC_0}{4})^{\frac{1}{3}}$ and $P\in\mathcal M_k$, the following holds with the universal constants $C_0 = \frac{e^3}{2\pi} \approx\;3.1967,$
for any $\epsilon>0$,
\[
\Pr\bigl(D(\widehat P_{n,d}\,\|\,P)\ge\epsilon\bigr) \leq 2(d-1) e^{-\frac{n\epsilon}{d-1}}.
\]
\end{lemma}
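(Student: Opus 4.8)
\emph{Proof proposal.} The plan is to reduce the multinomial relative-entropy deviation to a union bound over $d-1$ \emph{binary} Chernoff--Hoeffding bounds (Lemma~\ref{CH-bin}) via a chain-rule decomposition of the KL divergence. Order the categories $1,\dots,d$ and, for $i=1,\dots,d-1$, introduce the tail masses $T_i := \sum_{j\ge i}P_j$ and $\widehat T_i := \sum_{j\ge i}\widehat P_{n,d}(j)$, together with the conditional ``stop-at-$i$'' probabilities $\alpha_i := P_i/T_i$ and $\widehat\alpha_i := \widehat P_{n,d}(i)/\widehat T_i$. Viewing a draw from $P$ as a sequence of Bernoulli ``stop or continue'' decisions and applying the chain rule for relative entropy yields the exact identity
\[
D\bigl(\widehat P_{n,d}\,\|\,P\bigr)=\sum_{i=1}^{d-1}\widehat T_i\,D(\widehat\alpha_i\,\|\,\alpha_i),
\]
where $D(a\|b)$ denotes binary KL. (Categories with $P_i=0$, and all categories past the first index with $\alpha_i=1$, contribute zero and may be dropped, so one may assume each $\alpha_i\in(0,1)$.) I would establish this identity by induction on $d$, peeling off the first decision; it is already visible in the $d=2,3$ cases by direct cancellation.

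Given the identity, if $D(\widehat P_{n,d}\|P)\ge\epsilon$ then, since the sum has $d-1$ nonnegative terms, at least one index $i$ satisfies $\widehat T_i\,D(\widehat\alpha_i\|\alpha_i)\ge \epsilon/(d-1)$; call this event $\cA_i$. The crux is to bound $\mP(\cA_i)$ by $2e^{-n\epsilon/(d-1)}$ \emph{uniformly}. Condition on $N_i$, the number of samples landing in $\{i,\dots,d\}$, so $\widehat T_i=N_i/n$. On $\{N_i=0\}$ we have $\widehat T_i=0$ and $\cA_i$ is impossible. On $\{N_i=\ell\}$ with $\ell\ge1$, the $\ell$ relevant samples are i.i.d.\ with ``equals $i$'' probability $\alpha_i$, so $\widehat\alpha_i$ is an empirical Bernoulli$(\alpha_i)$ mean on $\ell$ draws, and $\cA_i$ becomes $D(\widehat\alpha_i\|\alpha_i)\ge n\epsilon/(\ell(d-1))$. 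Lemma~\ref{CH-bin} then gives $\mP(\cA_i\mid N_i=\ell)\le 2\exp\!\bigl(-\ell\cdot n\epsilon/(\ell(d-1))\bigr)=2e^{-n\epsilon/(d-1)}$, in which the $\ell$-dependence cancels; averaging over $N_i$ preserves this bound, and a union bound over $i=1,\dots,d-1$ yields $\mP\bigl(D(\widehat P_{n,d}\|P)\ge\epsilon\bigr)\le 2(d-1)e^{-n\epsilon/(d-1)}$.

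The one genuinely delicate point is this cancellation in the conditional Chernoff bound: it is exactly the pigeonhole split into $d-1$ equal pieces that produces the factor $(d-1)$ in the exponent, and that step is lossy. Avoiding it — i.e.\ obtaining a bound of the form $\mathrm{poly}(d,n)\,e^{-n\epsilon}$ with the full exponent — requires an $\epsilon$-dependent count of the types outside a KL ball, which is the route taken by \citet{concentration_bdd_mn}; the side condition $d\le (nC_0/4)^{1/3}$ with $C_0=e^3/(2\pi)$ seems to be inherited from that sharper analysis and is not required by the chain-rule argument above, which already suffices for the (weaker but, for Example~\ref{ex:multinom}, adequate) form stated here. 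A secondary, purely bookkeeping matter is the treatment of degenerate coordinates ($P_i=0$ or $\widehat T_i=0$) together with the endpoint conventions $D(0\|b)=\log\frac{1}{1-b}$ and $D(1\|b)=\log\frac{1}{b}$ in the chain rule; these terms vanish and can be disposed of at the outset.
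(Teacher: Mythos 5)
Your chain-rule argument is correct and, in fact, gives a self-contained proof where the paper provides none: in Appendix~\ref{apx:lemma} the lemma is stated as ``via \cite{concentration_bdd_mn}'' with no accompanying argument, so the paper's ``route'' is just citation. Let me confirm the key steps of yours. The identity $D(\widehat P_{n,d}\|P)=\sum_{i=1}^{d-1}\widehat T_i\,D(\widehat\alpha_i\|\alpha_i)$ is a standard stick-breaking/chain-rule decomposition of multinomial KL and checks out by telescoping the $\log\widehat T_i-\log T_i$ terms. The conditional multinomial fact is also standard: given $N_i=\ell$, the $\ell$ samples in $\{i,\dots,d\}$ are i.i.d.\ with each equal to $i$ independently with probability $\alpha_i$, so $\widehat\alpha_i$ is indeed an $\ell$-sample Bernoulli$(\alpha_i)$ empirical mean, and Lemma~\ref{CH-bin} applies conditionally with threshold $n\epsilon/(\ell(d-1))$; the $\ell$ cancels exactly as you say, giving $\mP(\cA_i\mid N_i=\ell)\le 2e^{-n\epsilon/(d-1)}$ uniformly in $\ell\ge 1$ (and $\cA_i$ is vacuous when $\ell=0$). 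Averaging over $N_i$ and union bounding over $i$ yields $2(d-1)e^{-n\epsilon/(d-1)}$, matching the stated constant and exponent exactly.

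Your secondary observation is also well founded and worth surfacing: the side condition $d\le(nC_0/4)^{1/3}$ with $C_0=e^3/(2\pi)$ is never invoked in the chain-rule argument, and indeed the weaker bound $2(d-1)e^{-n\epsilon/(d-1)}$ holds without any restriction on $d$ relative to $n$. That condition is inherited from the reference's sharper $\mathrm{poly}(d,n)\,e^{-n\epsilon}$ result, which needs a type-counting argument near the target distribution; for the form stated in the lemma (and used in Example~\ref{ex:multinom}), it is extraneous. Two small bookkeeping points the paper should fix and that your write-up implicitly handles: the index in ``$P\in\mathcal M_k$'' should presumably read $\mathcal M_d$, and one should note the degenerate-coordinate conventions ($P_i=0$ or $\widehat T_i=0$) as you do, so that the chain-rule terms and the Bernoulli KL are well defined throughout.
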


\begin{lemma}[Wasserstein distance bound via \cite{wasser_dist}]
Let $X$ be a sub-Gaussian r.v.\ with parameter $\sigma$. Let $F$ denote the CDF of $X$. Then, for every $n \ge 1$ and $\varepsilon$ such that $\tfrac{512\sigma}{\sqrt{n}} < \varepsilon < \tfrac{512\sigma}{\sqrt{n}} + 16\sigma\sqrt{e}$, we have
\[
\mathbb{P}\big( W_1(F_n, F) > \varepsilon \big)
\;\le\;
\exp\!\left(
-\frac{n}{256\sigma^2 e}\,\Big(\varepsilon - \tfrac{512\sigma}{\sqrt{n}}\Big)^2
\right),
\]
where $e$ is Euler's number.
\end{lemma}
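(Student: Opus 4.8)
\textbf{Proof plan for Theorem~\ref{thm:band}.}

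The plan is to prove the lower band \eqref{eq:band-lower-gamma} by mirroring the argument behind Theorem~\ref{thm:general}, but with the inequalities flipped, and then to combine \eqref{eq:band-lower-gamma} with Theorem~\ref{thm:general} to read off the asymptotic two-sided band \eqref{eq:band-quantile-asymp-gamma}. The only structural change from the upper-bound proof is that the pseudo-discrepancy is now an \emph{infimum} over the confidence set rather than a supremum, so the coverage event $\{p_j \in C_j(\hat p_j)\}$ now forces $\Delta_j^- \le \Delta_j = \Loss(p_j,\hat q_j)$ instead of $\hat\Delta_j \ge \Delta_j$. Thus $\Delta_j^-$ is, with probability at least $\gamma$ (conditional on $\sce_j, n_j$, hence by Assumption~\ref{ass:indep} independently across $j$), a \emph{lower} bound for the true per-scenario discrepancy.

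First I would set up the coupling. For each $j$ let $B_j := \ind{p_j \in C_j(\hat p_j)}$; on $\{B_j = 1\}$ we have $\Delta_j^- \le \Delta_j$, and $\mP(B_j = 1 \mid \sce_j, n_j) \ge \gamma$. The key distributional consequence: if $G_m^-$ denotes the empirical CDF of $\{\Delta_j^-\}_{j=1}^m$ and $F_\Delta$ the true CDF of $\Delta(\sce)$, then for any threshold $t$, the events $\{\Delta_j^- \le t\}$ are implied by $\{\Delta_j \le t\} \cap \{B_j = 1\}$ whenever $\Delta_j^-$ dominates downward — more precisely, one shows that the CDF of $\Delta_j^-$ at any point $t$ is at least $\gamma \cdot F_\Delta(t)$ by conditioning on $B_j$ and using $\Delta_j^- \le \Delta_j$ on $\{B_j=1\}$ together with $\Delta_j^- \ge 0$. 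Hence the population quantile of $\Delta_j^-$ at level $\gamma\alpha$ sits below the population quantile of $\Delta$ at level $\alpha$. Next I would invoke DKW (Lemma~\ref{dkw_in}) on the i.i.d.\ sample $\{\Delta_j^-\}_{j=1}^m$ to control $\sup_t |G_m^-(t) - \mathbb{E}[G_m^-(t)]|$ by $\sqrt{\log(2/\eta')/(2m)}$ with probability $1-\eta'$, which converts the population-quantile ordering into the empirical-quantile statement $\hat V_m^-(\gamma\alpha) \le V(\alpha) + o_m(1)$; rearranged, this says $\mP_{\sce}(\Delta(\sce) \ge \hat V_m^-(\gamma\alpha) \mid \cD) \ge 1 - \alpha - \varepsilon_-^{(\gamma)}/\sqrt{m}$. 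The correction term $\varepsilon_-^{(\gamma)}(\alpha,m,\eta)$ is assembled exactly as in Theorem~\ref{thm:general}: a $\sqrt{2\gamma\alpha\log(2m/\eta)}$-type term from the Bernoulli thinning by $B_j$ (a binomial concentration / inverse-CDF argument applied to the number of ``good'' scenarios), a $O(1/\sqrt{m})$ grid-uniformity term, and a $\sqrt{\log(4/\eta)/2}$ DKW term. For the asymptotic band \eqref{eq:band-quantile-asymp-gamma}, the lower side is \eqref{eq:band-lower-gamma} read at $\tau$ with the remainder absorbed into $o_m(1)$; for the upper side I would re-run the supremum proof of Theorem~\ref{thm:general} with general coverage $\gamma$ in place of $\tfrac12$ (permitted by the Remark following Theorem~\ref{thm:general}), which yields $V(\tau) \le \hat V_m(\gamma + (1-\gamma)\tau) + o_m(1)$ — the index $\gamma + (1-\gamma)\tau$ being the level at which the empirical quantile of the upper pseudo-discrepancies dominates the true quantile once the $\gamma$-coverage thinning is accounted for on the upper side.

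The main obstacle I expect is getting the quantile-index bookkeeping exactly right for the infimum side, i.e.\ verifying that the Bernoulli($\ge\gamma$) thinning degrades the effective quantile level from $\alpha$ to precisely $\gamma\alpha$ (and not, say, $1-(1-\gamma)(1-\alpha)$) — this requires carefully tracking in which direction the ``bad'' scenarios ($B_j = 0$) can push $\Delta_j^-$, and using that $\Delta_j^- \ge 0$ is the only a priori control available there. A secondary technical point is ensuring the DKW step is applied to a genuinely i.i.d.\ sequence: the $\{\Delta_j^-\}$ are i.i.d.\ because $\sce_j$ are i.i.d.\ and, conditional on $\sce_j$, the data $(\cD_j^{\gt}, \cD_j^{\simu})$ are i.i.d.\ with the prescribed conditional independence and fixed $k$ by Assumption~\ref{ass:indep}, so $\Delta_j^-$ (a deterministic function of $(\sce_j, \cD_j^{\gt}, \cD_j^{\simu})$) inherits i.i.d.-ness — this needs to be stated cleanly but is not deep. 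Everything else is a routine transcription of the Theorem~\ref{thm:general} argument with reversed inequalities, and the explicit form of $\varepsilon_-^{(\gamma)}$ follows by the same union-bound-and-collect step.
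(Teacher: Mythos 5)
The statement you were asked about is the Wasserstein-distance concentration lemma attributed to \citeauthor{wasser_dist}: a one-sample tail bound for $W_1(F_n,F)$ when $X$ is $\sigma$-sub-Gaussian. Your submission is instead a proof plan for Theorem~\ref{thm:band}, the two-sided quantile-band result of Section~\ref{sec5}. These are unrelated statements, and this is a genuine mismatch rather than an alternative route.

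The Wasserstein lemma is an off-the-shelf concentration inequality that the paper states in Appendix~\ref{apx:lemma} without proof, citing the source, and uses solely to justify the radius $r_j$ of the nonparametric $W_1$ confidence set in Example~\ref{ex:nonparam_w1_subg}. Nothing in your plan touches it: there is no confidence-set construction, no DKW step, no Bernoulli thinning, and no order-statistic argument involved in proving this bound. A proof of the claimed inequality would have to establish the sub-Gaussian tail of $W_1(F_n,F)$ around a bias term of order $\sigma/\sqrt{n}$ --- for instance by bounding $\mathbb{E}\,W_1(F_n,F)$ via $\int \sqrt{F(t)(1-F(t))}\,dt/\sqrt{n}$ for a sub-Gaussian CDF, and then applying a bounded-difference or transport-type concentration argument to $W_1(F_n,F)=\int|F_n(t)-F(t)|\,dt$ --- which is a wholly different style of argument from the confidence-set and quantile bookkeeping you sketch. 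As it stands, the proposal does not engage the target statement at all; you would need to start over and address the Bobkov--Ledoux-type bound directly.
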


\begin{lemma}[Order-statistic thresholding]\label{lem:os-threshold}
Let $x_1,\dots,x_m\in\mathbb{R}$ and let $x_{(1)}\le \cdots \le x_{(m)}$ be their order statistics.
Fix a threshold $\tau\in\mathbb{R}$ and an integer $N\in\{1,\dots,m\}$. If at least $N$ of the sample
values are at least $\tau$, i.e.\ $\big|\{j:\,x_j\ge \tau\}\big|\ge N$, then
\[
x_{(m-N+1)}\ \ge\ \tau.
\]
(Equivalently, if at least $N$ of the $x_j$ are strictly larger than $\tau$, the same conclusion holds.)
\end{lemma}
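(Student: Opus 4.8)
\textbf{Proof proposal for Lemma~\ref{lem:os-threshold} (Order-statistic thresholding).}

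The plan is a one-line deterministic counting argument, phrased through the complementary count of \emph{sub-threshold} entries. Set $S^{<} := \{\, j \in [m] : x_j < \tau \,\}$; the hypothesis $\big|\{j : x_j \ge \tau\}\big| \ge N$ is exactly the statement $|S^{<}| \le m - N$. I would then argue by contradiction: suppose $x_{(m-N+1)} < \tau$. Since the order statistics are nondecreasing, this forces $x_{(1)} \le \cdots \le x_{(m-N+1)} < \tau$, so at least $m-N+1$ of the sample values lie strictly below $\tau$, i.e.\ $|S^{<}| \ge m-N+1$, contradicting $|S^{<}| \le m-N$. Hence $x_{(m-N+1)} \ge \tau$, which is the claim. (Equivalently and more directly: the $m-N$ lowest-ranked order statistics already account for every index in $S^{<}$, so every order statistic of rank $\ge m-N+1$ comes from an index with $x_j \ge \tau$. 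The boundary cases $N = 1$ and $N = m$ are covered without modification.)

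I do not expect a genuine obstacle here — the statement is a bookkeeping fact about sorted finite lists — and the only point that warrants a moment's care is the behaviour at ties, i.e.\ values exactly equal to $\tau$; the contradiction phrasing above disposes of this automatically, since it only ever compares a single $x_j$ to $\tau$ through one weak inequality, so no separate ``no ties'' assumption is needed. The parenthetical strict variant likewise requires no extra work: if at least $N$ of the $x_j$ satisfy $x_j > \tau$, then a fortiori at least $N$ satisfy $x_j \ge \tau$ (because $\{x_j > \tau\} \subseteq \{x_j \ge \tau\}$), and the conclusion just proved applies verbatim. The substantive role of this lemma is downstream: in the proofs of Theorems~\ref{thm:general}, \ref{thm:pairwise-quantile}, and \ref{thm:band}, one first establishes with high probability a bound of the form ``at least $N$ of the pseudo-discrepancies $\hat\Delta_j$ (respectively $\Delta_j^{-}$) exceed a given true-quantile level $\tau$''; this lemma is precisely the device that upgrades such a count into the order-statistic comparison $\hat\Delta_{(m-N+1)} \ge \tau$, which is what the empirical-quantile coverage guarantees are built on.
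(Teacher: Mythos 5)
Your proposal is correct and takes essentially the same approach as the paper: both proceed by contradiction from $x_{(m-N+1)} < \tau$ and derive a cardinality clash, with the only cosmetic difference being that you count the sub-threshold set $S^<$ while the paper directly bounds the count of indices with $x_j \ge \tau$ (these are the same step via $|S^<| = m - |\{j: x_j \ge \tau\}|$). Your extra remarks on ties and the strict-inequality variant are correct but routine; no gap.
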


\textit{Proof:}

Suppose, for contradiction, that $x_{(m-N+1)}<\tau$. Then all of the first $m-N+1$ order statistics are strictly less than $\tau$, so there are at most $m-(m-N+1)=N-1$ indices $j$ with $x_j\ge \tau$,
contradicting $\big|\{j:\,x_j\ge \tau\}\big|\ge N$. Hence $x_{(m-N+1)}\ge \tau$.

\section{Proof to Theorem \ref{thm:general} and \ref{thm:band_informal}}\label{thm:proof}

\subsection{Proof for Theorem \ref{thm:general}}\label{pf:main}

We condition on $\cG_j:=\sigma(\psi_j,\hat q_j,n_j)$ and $\cG:=\sigma(\{\cG_j\}_{j=1}^m)$. By Assumption~\ref{ass:indep}, conditional on $\cG$ the additional randomness used to construct $\hat p_j$ (hence $\hat\Delta_j$ and the indicators below) is independent across $j$. For brevity, we abuse notation and write $p(\sce_j), q(\sce_j)$ as $p_j, q_j$, and similarly $\hat p, \hat q$. We also ignore the superscript $k$ to simplify the proof, and discussions on the dependence on $k$ can be found in Section~\ref{sec3}. Split $\delta = \delta_{\mathrm{DKW}} + \delta_{\mathrm{al}} + \delta_{\mathrm{cnt}}$ with $\delta_{\mathrm{DKW}} = \delta_{\mathrm{al}} = \delta_{\mathrm{cnt}} = \delta/3$, and union bound the three events defined below.

We first show an idealized case where we have access to the oracle $p_j$, and hence can obtain the actual sim-to-real discrepancy $\Delta_j$. Under this setting, we can directly define $F(t):=\mP_{\psi\sim\Psi}(\Delta(\psi)\le t)$ and 
\begin{align*}
    \bar F_m(t) := \frac1m\sum_{j=1}^m \mathbf 1\{\Delta_j\le t\}, \quad
    \bar V_m(u) := \inf\{t:\bar F_m(t)\ge u\}.
\end{align*}
By the DKW inequality (Lemma \ref{dkw_in}), the following event
\[
    \cE_{\mathrm{DKW}}:=\Big\{\sup_t|\bar F_m(t)-F(t)|\le \varepsilon_m(\delta)\Big\},
\]
occur with probability at least $1-\delta_{\mathrm{DKW}}$, where $\varepsilon_m(\delta) = \sqrt{\frac{\log(6/\delta)}{2m}}$. On $\cE_{\mathrm{DKW}}$, for every $u\in(0,1)$,
\begin{equation}\label{eq:DKW_g}
  \mP_{\psi\sim\Psi}\big(\Delta(\psi) \le \bar V_m(u)\ \big|\ \mathcal D\big) \ge u-\varepsilon_m(\delta).
\end{equation}

Now in our dataset, we do not have access to the above oracle $\Delta_j$. Therefore, we control for $\Delta_j$ using the pseudo-discrepancy $\hat \Delta_j$. Define $Y_j:=\mathbf 1\{\hat\Delta_j\ge \Delta_j\}.$ By the coverage of $\cC_j(\hat p_j,\gamma_j)$ and the definition of $\hat\Delta_j$,
\[
  \mP(Y_j=1\mid \cG_j) \ge \gamma_j.
\]
For $r=1,\dots,m$, let $\alpha_r := r/m$ and define the oracle right-tail set
\[
  S_r:=\{j\in[m]:\Delta_j\ge \bar V_m(1 - \alpha_r)\},
  \quad s_r:=|S_r|.
\]
By definition of empirical quantiles, $s_r\ge r$. We then introduce link from the oracle set to the pseudo-discrepancies by  defining counts
\[
  T_r := \sum_{j\in S_r}Y_j, \quad
  U_r := \bigl | \{j:\hat\Delta_j\ge \bar V_m(1-\alpha_r)\}\bigr|.
\]
By construction, $U_r\ge T_r$. Notice that $U_r$ will give us the amount of $\hat \Delta$ required to achieve the tail count of the empirical oracle quantile function, hence we will try to create a lower bound for $U_r$ via $T_r$. 

Under Assumption~\ref{ass:indep}, $\{n_j\}$ are detereministic, and hence independent of $\{\psi_j\}$ and therefore independent of $\{\Delta_j\}$; consequently, the random index set $S_r$ (which is measurable w.r.t.\ $\{\Delta_j\}_{j=1}^m$) is independent of the vector $(\gamma_1, \dots, \gamma_m)$. Conditional on the realized multiset $\{\gamma_j\}_{j=1}^m$, the subcollection $\{\gamma_j:j\in S_r\}$ is thus a simple random sample without replacement from $\{\gamma_j\}_{j=1}^m$. By Hoeffding--Serfling \cite{hoeffding_serfling}, for each $r$ and each $\tau>0$,
\[
  \mP\Big(\frac{1}{s_r} \sum_{j\in S_r}\gamma_j \le \bar\gamma_m-\tau \Big) \le \exp(-2s_r\tau^2).
\]
Set
\[
  \tau_r:=\sqrt{\frac{\log(m/\delta_{\mathrm{al}})}{2s_r}} \le \sqrt{\frac{\log(3m/\delta)}{2r}}  =: \varepsilon_{\mathrm{al}}(\alpha_r,\delta),
\]
and union bound over $r=1,\dots,m$ to obtain an event $\cE_{\mathrm{al}}$ with
$\mP(\cE_{\mathrm{al}})\ge 1-\delta_{\mathrm{al}}$ on which, for all $r$,
\begin{equation}\label{eq:align_g}
  \sum_{j\in S_r}\gamma_j\ \ge\ s_r\big(\bar\gamma_m-\varepsilon_{\mathrm{al}}(\alpha_r,\delta)\big).
\end{equation}
Conditional on $\cG$, the indicators $\{Y_j\}_{j=1}^m$ are independent and bounded in $[0,1]$, so Lemma \ref{lem:add-chernoff} implies that for any $t>0$,
\[
  \mP\Big(T_r\le \mE[T_r\mid\cG]-t \Big| \cG\Big) \le \exp\Big(-\frac{2t^2}{s_r}\Big).
\]
Since $\mE[T_r\mid\cG]=\sum_{j\in S_r}\mE[Y_j\mid\cG]\ge \sum_{j\in S_r}\gamma_j$, choose
\[
  c_m:=\sqrt{\frac12 \log \frac{m}{\delta_{\mathrm{cnt}}}} \le \sqrt{\frac12\log\frac{3m}{\delta}}, \quad   t_r:=c_m\sqrt{s_r}.
\]
Then $\mP(T_r\ge \sum_{j\in S_r}\gamma_j-c_m\sqrt{s_r}\mid\cG)\ge 1-\delta_{\mathrm{cnt}}/m$. A union bound over $r=1,\dots,m$ yields an event $\cE_{\mathrm{cnt}}$ with $\mP(\cE_{\mathrm{cnt}})\ge 1-\delta_{\mathrm{cnt}}$ on which, for all $r$,
\begin{equation}\label{eq:count_g}
  T_r \ge \sum_{j\in S_r}\gamma_j - c_m\sqrt{s_r}.
\end{equation}

Combining \eqref{eq:align_g} and \eqref{eq:count_g} gives, for all $r$,
\[
  T_r \ge s_r\big(\bar\gamma_m-\varepsilon_{\mathrm{al}}(\alpha_r,\delta)\big)-c_m\sqrt{s_r}
\]
on the event $\cE_{\mathrm{al}} \cap \cE_{\mathrm{cnt}}$. Using $s_r\ge r=m\alpha_r$ and $c_m/\sqrt m=b_m(\delta)$ yields
\[
  U_r \ge T_r \ge m\Big(\big(\bar\gamma_m-\varepsilon_{\mathrm{al}}(\alpha_r, \delta) \big)\alpha_r - b_m(\delta)\sqrt{\alpha_r}\Big).
\]
Equivalently, at least $m\alpha_{\mathrm{eff}}(\alpha_r)$ of the $\hat\Delta_j$'s exceed $\bar V_m(1-\alpha_r)$, hence
\begin{equation}\label{eq:quant_link_g}
  \hat V_m \big( 1-\alpha_{\mathrm{eff}}(\alpha_r) \big) \ge \bar V_m(1-\alpha_r), \quad r=1, \dots,m.
\end{equation}

Finally, we work on $\cE_{\mathrm{DKW}} \cap \cE_{\mathrm{al}}\cap\cE_{\mathrm{cnt}}$, which has probability at least $1-\delta$ by the union bound. Fix any $\alpha\in(0,1)$ and let $\alpha_+ := \lceil m\alpha\rceil/m$. By discretization, $\alpha_{\mathrm{eff}}(\alpha)=\alpha_{\mathrm{eff}}(\alpha_+)$. Applying \eqref{eq:quant_link_g} at $r=\lceil m\alpha\rceil$ yields

\[
  \hat V_m\big(1-\alpha_{\mathrm{eff}}(\alpha)\big) = \hat V_m\big(1-\alpha_{\mathrm{eff}}(\alpha_+)\big)  \ge \bar V_m(1-\alpha_+).
\]
Applying \eqref{eq:DKW_g} with $u=1-\alpha_+$ gives
\begin{align*}
    \mP_{\psi\sim\Psi}\big(\Delta(\psi)\le \bar V_m(1-\alpha_+) \big| \mathcal D\big) &\ge 1-\alpha_+-\varepsilon_m(\delta) \\
    & \ge 1-\alpha-\varepsilon_m(\delta)-\frac{1}{m}.
\end{align*}
Monotonicity in the threshold implies \eqref{eq:main_guarantee}, hence shown.
\qed

\begin{corollary}\label{thm:general_gamma_fixed}
Adopt the setting of Theorem~\eqref{thm:general} and set $\gamma_j = \gamma$ for all $j \in [m]$. Then, for any $\alpha\in(0,1)$, with probability at least $1-\eta$ over $\D$, we have
\begin{equation}\label{eq:main_gamma}
    \mP_{\sce\sim\scesp}\Big(\Delta_\sce^{(k)} \le \Vhatm{1-\gamma\alpha} \Big| \D\Big) \ge 1 - \alpha -\frac{\varepsilon_\gamma(\alpha,m,\eta)}{\sqrt m},
\end{equation}
where the remainder $\varepsilon_\gamma$ is $O \big(\sqrt{(\log m)/m}\big)$ as $m\to\infty$.
\end{corollary}

\subsection{Proof for Corollary \ref{thm:general_gamma_fixed}:}

One method is to directly adopt Theorem \ref{thm:general}'s proof and readjust the remainder term. Here we adopt a slightly alternative method. We first consider a fixed level of confidence $\bar \alpha$, then extend to $\bar \alpha$ holding uniformly across $(0,1)$. Throughout the proof, $\alpha\in(0,1)$ denotes a generic quantile index (a function argument) and is distinct from the target coverage level $\bar\alpha$. 

As a setup, we work conditionally on $\mathcal G_j:=\sigma(\psi_j,\hat q_j,n_j), \mathcal G:=\sigma \big(\{\mathcal G_j\}_{j=1}^m\big).$ Here \(\mathcal G_j\) is the \(\sigma\)-field collecting all “human-side’’ information for scenario \(j\). Thus \(\mathcal G := \sigma(\{\mathcal G_j\}_{j=1}^m)\) represents the joint information from all scenarios that we treat as fixed when analyzing the remaining randomness coming from the estimation noise in \(\hat p_j\). For brevity, we abuse notation and write $p(\sce_j), q(\sce_j)$ as $p_j, q_j$, and similarly $\hat p, \hat q$. We also ignore the superscript $k$ to simplify the proof, and discussions on the dependence on $k$ can be found in Section~\ref{sec3}. In addition, for any quantities $\{\Delta_j\}_{j=1}^m$, we denote the sorted version as $\{\Delta_{(i)}\}_{i=1}^m$, ie. $\Delta_{(1)} \leq \cdots \leq \Delta_{(m)}.$  For any sequence $\{\Delta_j\}_{j=1}^m$, let $\Delta_{(1)}\le\cdots\le\Delta_{(m)}$ denote its order statistics.

We seek the quantile function of $\Delta_j=\Loss(p_j,\hat q_j)$, but only observe the estimators $(\hat p_j,\hat q_j)$. Note that the sequence $\{\Delta_j\}_{j=1}^m$ is i.i.d., since the simulator budget $k$ is fixed and each scenarios $\{\psi_j\}$ are i.i.d.\ as assumed in Section~\ref{sec2}. 

Let $\bar V_m(\alpha) = \inf\{t : \bar F_m(t) \geq \alpha\} =  \Delta_{(\lceil m\alpha\rceil)}$ be the empirical $\alpha$ quantile of $\{ \Delta_j\}_{j=1}^m$, where $\bar F_m(x) = \frac{1}{m} \sum_{j=1}^m \mathbb{I}(\Delta_j \leq x)$. By Lemma \ref{dkw_in}, with probability $1-\delta$: 
\begin{equation}\label{phat_dkw}
    \mP_{\psi \sim \Psi}(L(p(\psi), \hat q(\psi)) \leq \bar V_m(1 - \frac{\bar \alpha}{2}) | \mathcal{D}) \geq 1 - \frac{\bar \alpha}{2} -\sqrt{ \frac{\log(2/\delta)}{2m}} 
\end{equation}
and we would have our desired bound. 


However, for the above argument to hold, we would need to know $p_j$, instead we only have the estimated $\hat p_j$. Therefore, we instead construct a pseudo-gaps using confidence sets with coverage level $\gamma$, and define the two gap terms as 
$$
\begin{cases}
    {\Delta}_j = L(p_j, \hat q_j) & \text{, i.i.d., yet unobservable}  \\
    \hat \Delta_j := \sup_{u \in C_j(\hat p_j,\gamma)} L(u, \hat q_j) & \text{, Not i.i.d., yet observable,}
\end{cases}
$$
where  $\Cj {\hat p_j,\gamma} \subset\Th$ are data-driven confidence sets satisfying 
$\mP (p_j\in \Cj {\hat p_j, \gamma} \big| \sce_j, n_j ) \ge \gamma.$

By Assumption~\ref{ass:discr} and the compactness of the confidence set $C_j(\hat p_j,\gamma)\subset\Theta$, Berge's maximum theorem guarantees that the supremum in the definition of $\hat\Delta_j$ is attained, hence $\hat\Delta_j$ is well defined. Therefore, by the coverage property of $C_j$, we have $\mP(\hat \Delta_j \geq {\Delta}_j | \mathcal G_j \cup \sigma(\gamma)) \geq \gamma, $ and we get $\forall j$ 
\begin{equation}\label{error_prob}
    \mP(\hat \Delta_j \geq {\Delta}_j| \mathcal G_j) \geq \gamma.    
\end{equation}
Furthermore, by the tower property and $\gamma \indep \mathcal G_j$,
\[
\mP \big(\hat\Delta_j \ge \Delta_j\big) =\mathbb \mE \big[\mP(\hat\Delta_j\ge \Delta_j \big| \mathcal G_j)\big] \ge \gamma.
\]
Moreover, by the above discussion, conditional on $\cG$, the indicators $Y_j=\mathbf 1\{\hat \Delta_j \ge \Delta_j\}$ are independent across $j$. We will use $\{\hat \Delta_j\}_{j=1}^m$ to create an upper bound of $\bar V_m(1-\alpha)$, which along side \eqref{phat_dkw} will give us our desired envelope. Intuitively, we want to find a larger quantile of $\hat \Delta_j$ and with \eqref{error_prob}, we can claim an upper bound of $\bar V_m(1-\alpha)$ with high probability.

First, define $S_{\alpha} := \{j \in [m]: \Delta_j \geq \bar V_m(1-\alpha)\}$ and $s := |S_{\alpha}|.$
By definition of the empirical $(1-\alpha)$–quantile $\bar V_m(1-\alpha)$, at most a fraction $(1-\alpha)$ of the sample can lie strictly below it. More precisely,
\[
\bigl|\{j \in [m] : \Delta_j < \bar V_m(1-\alpha)\}\bigr| \le \bigl\lfloor m(1-\alpha)\bigr\rfloor.
\]
Therefore the number of indices with $\Delta_j \ge \bar V_m(1-\alpha)$ is at least
\[
s \ge  m - \bigl\lfloor m(1-\alpha)\bigr\rfloor
= \bigl\lceil m\alpha \bigr\rceil \ge \bigl\lfloor m\alpha \bigr\rfloor.
\]
We next define $Y_j = \mathbb{I}(\hat \Delta_j \geq \Delta_j)$. By \eqref{error_prob}: $\mP(Y_j = 1 | \mathcal{\mathcal G}) \geq \gamma$. Next, define $T_{\alpha} = \{j \in S_{\alpha}: \hat \Delta_j \geq \Delta_j\} = \{j \in S_{\alpha}: Y_j = 1\} = \{j: \hat \Delta_j \geq \Delta_j\geq \bar V_m(1- \alpha)\}$. First, we calculate $\mE[T_{\alpha}]$:
\[
\mE[T_{\alpha}|\mathcal{G}] = \mE \Big[\sum_{j\in S_{\alpha}} Y_j | \mathcal{G} \Big] = \sum_{j\in S_{\alpha}} \mP(Y_j | \mathcal{G}) \geq \gamma s.
\]
Lemma \ref{lem:add-chernoff} implies that for any $\delta \in [0, \gamma s]$, we have:
\begin{align*}
    \mP \Big(|T_{\alpha}| \leq \gamma s -t |\mathcal{G} \Big) \leq \mP \Big(|T_{\alpha}| \leq \mE[T_{\alpha}] - t |\mathcal{G} \Big) \leq \exp(-\frac{2t^2}{s}),
\end{align*}
where we applied $\mE[T_{\alpha}|\mathcal{G}] \geq \gamma s$ in the first inequality. By setting $t = c\sqrt{s}$:
\begin{equation}\label{add_t}
    \mP \Big(|T_{\alpha}| \leq \gamma s -t |\mathcal{G} \Big) \leq \exp (-2c^2)
\end{equation}
With this bound, we can link the actual set of indices we have interest, ie. $U_\alpha = \{j: \hat \Delta_j \geq \bar V_m(1-\alpha)\}$ to $T_{\alpha}$. By construction, $T_\alpha \subseteq U_\alpha$, hence by \eqref{add_t}, with probability greather than $1 - \exp (-2c^2 )$:
\[
|U_\alpha| \geq |T_\alpha| \geq \gamma s - c \sqrt{s},
\]
which implies at least $\gamma s - c \sqrt{s}$ of the $\hat \Delta_j$'s are larger than $\bar V_m(1-\alpha)$ with high probability. 

We now analyze what coverage guarantee can we get for the inner probability via order statistics for any $\alpha$. Set \(N:=\big\lfloor \gamma s - c\sqrt s\big\rfloor\) and define $\hat V_m(\alpha) := \inf \{t: \hat F_m(t) \ge \alpha\} = \Delta_{(\lceil m\alpha\rceil)}$, where $\hat F(x) = \frac{1}{m} \sum_{j=1}^m \mathbb{I}(\hat \Delta_j \leq x)$.
If at least \(N\) sample values exceed \(\bar V_m(1-\alpha)\), then by order-statistics calculus (Lemma~\ref{lem:os-threshold})
\[
\hat V_m(1-\alpha) = \hat\Delta_{(m-\lfloor m\alpha\rfloor)} \ge \bar V_m(1-\alpha_{\mathrm{eff}}),
\]
whenever \(\alpha_{\mathrm{eff}}\) is chosen so that \(N\ge \lfloor m\alpha\rfloor+1\) holds when \(s \ge \lfloor m\alpha_{\mathrm{eff}}\rfloor+1\).

A sufficient condition for the above to satisfy is
\[
\gamma m\alpha_{\mathrm{eff}} - c\sqrt{m\alpha_{\mathrm{eff}}} - 1 \ge m\alpha.
\]
Define $\alpha_{\mathrm{eff}}(\alpha,c,m) := \inf \{x \in (0,1): \gamma x - c\sqrt{\tfrac{x}{m}} - \tfrac{1}{m} \ge \alpha\}$. Writing \(y=\sqrt x\), this is equivalent to \(\gamma y^2 - \tfrac{c}{\sqrt m} y - \big(\tfrac{1}{m}+\alpha\big) \ge 0\), so the minimal admissible \(y\) is
\begin{align*}
    y^*  = \frac{c/\sqrt m + \sqrt{ c^2/m + 4\gamma \alpha + 4\gamma/m }}{2\gamma}, \quad \alpha_{\mathrm{eff}}(\alpha,c,m) = (y^*)^2.
\end{align*}

Thus, we define
\begin{align}\label{var_bdd_flip_new}
\mathcal E_\alpha &:= \Big\{\ \hat V_m(1-\alpha) \ge \bar V_m\big(1-\alpha_{\mathrm{eff}}(\alpha,c,m)\big) \Big\}, \\
&\text{then} \qquad
\mP(\mathcal E_\alpha) \ge 1-e^{-2c^2}. \nonumber
\end{align}

Next, we apply \eqref{phat_dkw} at level \(1-\alpha_{\mathrm{eff}}(\bar \alpha,c,m)\):
\begin{align*}
\mP_{\psi\sim\Psi}\Big( \Delta_\sce \le \bar V_m(1-\alpha_{\mathrm{eff}}) \Big|\mathcal D\Big) \ge 1-\alpha_{\mathrm{eff}}(\bar \alpha,c,m) - \varepsilon_m(\delta).    
\end{align*}
On \(\mathcal E_{\bar \alpha}\) in \eqref{var_bdd_flip_new}, \(\bar V_m(1- \alpha_{\mathrm{eff}}) \le \hat V_m(1-\bar \alpha)\).
Hence,
\begin{equation}\label{eq:main_fixed_alpha}
    \mP_{\psi\sim\Psi}\big(\Delta_\sce \le \hat V_m(1-\bar \alpha) \big|\mathcal D\big) \ge 1-\alpha_{\mathrm{eff}}(\bar \alpha,c,m) - \varepsilon_m(\delta),
\end{equation}
with outer probability at least \(1-\delta-e^{-2c^2}\).

The exact algebraic form is
\begin{align*}
    \alpha_{\mathrm{eff}}(\bar \alpha, c,m) &= \frac{\big(\tfrac{c}{\sqrt{m}}+\sqrt{\tfrac{c^2}{m}+4\gamma \bar \alpha+\tfrac{4\gamma}{m}}\big)^2}{4\gamma^2} \\
    &=\frac{\bar \alpha}{\gamma}+\frac{c}{\gamma \sqrt m}\sqrt{4 \gamma \bar \alpha+\tfrac{c^2+4\gamma}{m}}+\frac{c^2}{2 \gamma^2 m} + \frac{1}{\gamma m}.
\end{align*}
Therefore, as \(m\to\infty\),
\(\alpha_{\mathrm{eff}}(\alpha,c,m)=\tfrac{\bar \alpha}{\gamma} + c\sqrt{4 \bar \alpha/\gamma m}+O(m^{-1})\).

We have shown that for any target level $\alpha$ and choice of $c>0$, the preceding argument yields a high-probability concentration bound based on $\{\hat\Delta_j\}_{j=1}^m$. We now extend the guarantee to hold \emph{uniformly} over all $\alpha$. Fix $c>0$ and $\delta\in(0,1)$. For the grid $\alpha_r:=r/m$ ($r=1,\dots,m$), let
\begin{align*}
    \mathcal E_{\mathrm{DKW}} &:= \Big\{\sup_x\big|\hat F_m(x)-F^*(x)\big|\le \varepsilon_m(\delta)\Big\} \\
    \mathcal E_r &:= \Big\{\text{\eqref{eq:main_fixed_alpha} holds with }\alpha=\alpha_r\Big\}.
\end{align*}
By DKW, $\mP(\mathcal E_{\mathrm{DKW}})\ge 1-\delta$, and by the fixed-level argument, $\mP(\mathcal E_r)\ge 1-e^{-2c^2}$ for each $r$. Hence, by a union bound,
\[
\mP \Big(\mathcal E_{\mathrm{DKW}} \cap \bigcap_{r=1}^m \mathcal E_r\Big) \ge 1-\delta - m e^{-2c^2}.
\]
For any $\alpha\in(0,1)$ let $r=\lceil m\alpha\rceil$ and denote $\alpha_+ := \alpha_r = r/m \in[\alpha,\alpha+1/m]$. Since the empirical quantile is piecewise constant on the $m$-grid,
\[
\widehat V_m(1-\alpha) = \widehat V_m(1-\alpha_+).
\]
Applying \eqref{eq:main_fixed_alpha} at level $\alpha_+$ yields
\begin{align*}
\mathbb P_{\sce\sim\scesp} &\Big(\Delta(\sce)\le \widehat V_m(1-\alpha) \Big| \mathcal D\Big)
 \ge 1 - \frac{\alpha_+}{\gamma} - \frac{c}{\gamma \sqrt m}\sqrt{4 \gamma \alpha_+ + \tfrac{c^2+4\gamma}{m}} - \frac{c^2}{2 \gamma^2 m} - \frac{1}{\gamma m} - \varepsilon_m(\delta).
\end{align*}

Since $\alpha_+\in[\alpha,\alpha+1/m]$ and the right-hand side is nonincreasing in $\alpha$, the same bound holds with $\alpha$ replaced by $\alpha_+$, and (optionally) one may absorb the rounding slack $\alpha_+-\alpha\le 1/m$ into the $O(m^{-1})$ term by a crude inequality
\[
\frac{\alpha_+}{\gamma} \le \frac{\alpha}{\gamma} + \frac{1}{\gamma m}, \quad \sqrt{4 \gamma \alpha_+ + \tfrac{c^2+ 4\gamma }{m}} \le \sqrt{4 \gamma \alpha + \tfrac{c^2+8\gamma }{m}}.
\]
Therefore, with probability at least $1-\delta-m e^{-2c^2}$ over $\mathcal D$, the guarantee \eqref{eq:main_gamma} (at $\alpha$ replaced by $\alpha_+$) holds \emph{uniformly} for all $\alpha\in(0,1)$. The form in the main theorem is a simplification with respect to $c$ and $\delta$ while taking $\gamma = \tfrac{1}{2}$.

\subsection{Corollary for simulator budget:}\label{pf:budget}

\begin{corollary}\label{cor:k_dependence}
Suppose Assumptions~\ref{ass:indep} and~\ref{ass:discr} hold. For any simulation sample size $k_j\in\mathbb N$, define the per-scenario true sim-to-real discrepancy and its pseudo-discrepancy by
\[
    \Delta_j^{\star,(k)} := \Loss(p_j,q_j), \qquad \hat\Delta_j^{\star,(k)} := \sup_{\substack{u\in C_j^p(\hat p_j)\\ v\in C_j^q(\hat q_j)}} \Loss(u,v),
\]
where  $C_j^p(\hat p_j)\subset\Theta_p$, $C_j^q(\hat q_j)\subset\Theta_q$ are data-driven compact confidence sets such that
\[
    \mP\big(p_j\in C_j^p(\hat p_j),\ q_j\in C_j^q(\hat q_j) \big| \sce_j, n_j, k_j\big) \ge \tfrac12.
\]
Let $\hat V_m^\star(\alpha)$ denote the empirical $\alpha$-quantile of $\{\hat\Delta_j^{\star,(k)}\}_{j=1}^m$.

Then, for any $\alpha\in(0,1)$ and any $\eta\in(0,1)$, with probability at least $1-\eta$ over $\cD$, we have
\begin{equation}\label{eq:main-full-pq}
    \mP_{\sce \sim \Psi}\Big(\Delta_\sce^{\star,(k)} \le \hat V_m^\star\big(1-\tfrac{\alpha}{2}\big) \Big| \cD\Big) \ge 1 - \alpha - \frac{\varepsilon(\alpha,m,\eta)}{\sqrt{m}},
\end{equation}
where the remainder $\varepsilon(\alpha,m,\eta)$ is the same as in Theorem~\ref{thm:general_gamma_fixed}, in particular $\varepsilon(\alpha,m,\eta)=O\big(\sqrt{\log m}\big)$ and $\varepsilon(\alpha, m, \eta) / \sqrt{m} = O\big(\sqrt{(\log m)/m}\big)$.
\end{corollary}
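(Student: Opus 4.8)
The plan is to re-run the proof of Theorem~\ref{thm:general} almost verbatim, with one structural change: the single confidence set $C_j(\hat p_j)$ is replaced by the product set $C_j^p(\hat p_j)\times C_j^q(\hat q_j)\subset\Theta_p\times\Theta_q$, and the previously ``held-fixed'' quantity $\hat q_j$ is no longer conditioned upon. Concretely, I would condition on $\mathcal G_j:=\sigma(\psi_j,n_j,k_j)$ and $\mathcal G:=\sigma(\{\mathcal G_j\}_{j=1}^m)$, so that $\Delta_j^{\star,(k)}=\Loss(p(\psi_j),q(\psi_j))$ is $\mathcal G_j$-measurable, while all the randomness driving $\hat\Delta_j^{\star,(k)}$ comes from the pair $(\hat p_j,\hat q_j)$. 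Since $p(\cdot),q(\cdot)$ are deterministic functionals of the scenario and the $\psi_j$ are i.i.d.\ by Assumption~\ref{ass:indep}, the sequence $\{\Delta_j^{\star,(k)}\}_{j=1}^m$ is i.i.d.\ --- and this holds regardless of the possibly heterogeneous simulator budgets $k_j$, precisely because $\Delta_j^{\star,(k)}$ does not involve $\hat q_j$ --- which is all that DKW (Lemma~\ref{dkw_in}) needs. Likewise, by Assumption~\ref{ass:indep} the pairs $(\hat p_j,\hat q_j)$ are independent across $j$ conditional on $\mathcal G$, which is all the Hoeffding step needs. Joint continuity of $\Loss$ (Assumption~\ref{ass:discr}) together with compactness of $C_j^p(\hat p_j)\times C_j^q(\hat q_j)$ guarantees, by Weierstrass/Berge, that the supremum defining $\hat\Delta_j^{\star,(k)}$ is attained, so the quantity is well defined.

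Next I would establish the analogue of~\eqref{error_prob}. On the coverage event $\{p_j\in C_j^p(\hat p_j),\ q_j\in C_j^q(\hat q_j)\}$ the pair $(p_j,q_j)$ lies in the product set over which we maximize, so $\hat\Delta_j^{\star,(k)}\ge\Loss(p_j,q_j)=\Delta_j^{\star,(k)}$; by the stated joint coverage bound this event has conditional probability at least $\tfrac12$, hence $\mP(\hat\Delta_j^{\star,(k)}\ge\Delta_j^{\star,(k)}\mid\mathcal G_j)\ge\tfrac12$, and the indicators $Y_j=\mathbf 1\{\hat\Delta_j^{\star,(k)}\ge\Delta_j^{\star,(k)}\}$ are independent across $j$ given $\mathcal G$. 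From here the argument is identical to Theorem~\ref{thm:general}: apply DKW to $\{\Delta_j^{\star,(k)}\}$ to control its empirical quantile against the true quantile function of $\Delta^{\star,(k)}$; use Hoeffding (Lemma~\ref{lem:add-chernoff}) to lower-bound the number of indices above $\bar V_m(1-\alpha_{\mathrm{eff}})$ on which $\hat\Delta_j^{\star,(k)}$ also exceeds that level; and convert this count into a quantile-index shift via the order-statistic lemma (Lemma~\ref{lem:os-threshold}), producing the very same $\alpha_{\mathrm{eff}}(\alpha,c,m)=2\alpha+\tfrac{c}{\sqrt m}\sqrt{8\alpha+\tfrac{4c^2+8}{m}}+\tfrac{2c^2+2}{m}$. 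Finally, union-bounding over the grid $\alpha_r=r/m$ and optimizing the free constants $c,\delta$ exactly as in the proof of Theorem~\ref{thm:general} upgrades the fixed-level statement to the uniform-in-$\alpha$ bound~\eqref{eq:main-full-pq} with the same remainder $\varepsilon(\alpha,m,\eta)$.

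The main thing to be careful about --- and essentially the only place the argument genuinely departs from Theorem~\ref{thm:general} --- is the bookkeeping of which $\sigma$-field is held fixed. In Theorem~\ref{thm:general}, $\hat q_j$ was part of the conditioning information and the only coverage slack came from $\hat p_j$; here neither estimator is conditioned upon, so the ``$\tfrac12$'' must be supplied by the \emph{joint} event $\{p_j\in C_j^p(\hat p_j),\ q_j\in C_j^q(\hat q_j)\}$. It is therefore essential to invoke the joint coverage hypothesis of the corollary directly rather than to combine two separate marginal guarantees --- a naive union bound of two $\tfrac12$-coverage events would give the vacuous conditional lower bound $0$. Once that single substitution is in place, every downstream estimate --- attainment of the supremum, conditional independence of the $Y_j$ given $\mathcal G$, the Hoeffding tail, the order-statistic conversion, and the grid union bound --- carries over unchanged, which is exactly why the remainder is literally the same function $\varepsilon(\alpha,m,\eta)$ as in Theorem~\ref{thm:general}.
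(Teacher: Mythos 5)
Your proposal is correct and follows essentially the same route as the paper: replace the single confidence set by the product $C_j^p(\hat p_j)\times C_j^q(\hat q_j)$, drop $\hat q_j$ from the conditioning so that $\Delta_j^{\star,(k)}=\Loss(p_j,q_j)$ becomes $\mathcal G_j$-measurable and the $\{\Delta_j^{\star,(k)}\}$ are i.i.d.\ (even under heterogeneous $k_j$), and let the joint coverage event supply the $\tfrac12$ validity bound, after which DKW, Hoeffding, the order-statistic lemma, and the grid union bound carry over verbatim. The only cosmetic difference is that the paper additionally spells out how to \emph{achieve} the joint-coverage hypothesis in practice --- exploiting the conditional independence $\mathcal D^{\gt}_j\indep\mathcal D^{\simu}_j$ from Assumption~\ref{ass:indep} to pick marginal levels with $\gamma_p\gamma_q=\tfrac12$ (e.g.\ $\gamma_p=\gamma_q=1/\sqrt2$) --- whereas you treat joint coverage as a black-box hypothesis and correctly caution that a union bound of two $\tfrac12$-coverage events would be vacuous.
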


\begin{proof}
The argument is identical to the proof of Theorem~\ref{thm:general_gamma_fixed}, except that we now construct separate confidence sets for $p_j$ and $q_j$ with marginal coverages $\gamma_p$ and $\gamma_q$ chosen so that $\gamma_p\gamma_q = \tfrac12$. For example, we can set $\gamma_p = \gamma_q = \tfrac{1}{\sqrt{2}}$. This ensures that the joint event $\{p_j\in C_j^p(\hat p_j),\,q_j\in C_j^q(\hat q_j)\}$ plays exactly the same role as the univariate coverage event in Theorem~\ref{thm:general}, i.e., $\mP(\hat\Delta_j^{\star,(k)} \ge \Delta_j^{\star,(k)} \mid \cG_j)\ge\tfrac12$. All subsequent steps then carry over verbatim, yielding the stated bound.
\end{proof}

\begin{theorem}\label{thm:band}
  Suppose the setup of Theorem~\ref{thm:general} and Assumptions~\ref{ass:indep}--\ref{ass:discr} hold. Let $\{\gamma_{L,j}, \gamma_{U,j} \}_{j=1}^m\subset(0,1]$ be coverage levels and define the lower and upper pseudo-discrepancies
  \[
    \hat \Delta_j^- := \inf_{u\in \cC_j(\hat p_j,\gamma_{L,j})} \Loss(u, \hat q_j), \ \hat \Delta_j^+ := \sup_{u\in \cC_j(\hat p_j,\gamma_{U,j})} \Loss(u, \hat q_j),
  \]
  where $\cC_j(\hat p_j,\gamma)\subset\Theta$ are data-driven compact confidence sets satisfying the
  conditional coverage guarantees
  \begin{align*}
      \mP \big(p_j\in \cC_j(\hat p_j,\gamma_{L,j}) \mid \sce_j,n_j\big)&\ge \gamma_{L,j} \\
      \mP\big(p_j\in \cC_j(\hat p_j,\gamma_{U,j}) 
    \mid \sce_j,n_j\big)&\ge \gamma_{U,j}.
  \end{align*}
  Let $\hat V_m^-(\cdot)$ denote the empirical quantile function of $\{\hat\Delta_j^-\}_{j=1}^m$, and $\hat V_m^+(\cdot)$ the empirical quantile function of $\{\hat\Delta_j^+\}_{j=1}^m$. Define the empirical mean coverage levels $\bar\gamma_{L,m}:=\frac{1}{m}\sum_{j=1}^m \gamma_{L,j}, \bar\gamma_{U,m}:=\frac{1}{m}\sum_{j=1}^m \gamma_{U,j}.$ For $\alpha\in(0,1)$ and $\delta\in(0,1)$ define the discretized level $ \tilde\alpha(\alpha) := \frac{\lceil m\alpha\rceil}{m}.$ Define the effective indices
    \begin{align*}
        \alpha_{\mathrm{eff}}^{L}(\alpha) &:= \Big[ \big(\bar\gamma_{L,m}-\varepsilon_{\mathrm{al}}(\alpha,\delta)\big)\tilde\alpha(\alpha) - b_m(\delta)\sqrt{\tilde\alpha(\alpha)} \Big]_+, \\
        \alpha_{\mathrm{eff}}^{U}(\alpha) &:= \Big[ \big(\bar\gamma_{U,m}-\varepsilon_{\mathrm{al}}(\alpha,\delta)\big)\tilde\alpha(\alpha) - b_m(\delta)\sqrt{\tilde\alpha(\alpha)} \Big]_+,
    \end{align*}
    where the remainder terms are defined as $ \varepsilon_{\mathrm{al}}(\alpha,\delta) := \sqrt{\frac{\log(3m/\delta)}{2m \tilde\alpha(\alpha)}},$ and $ b_m(\delta):=\sqrt{\frac{1}{2m}\log\frac{3m}{\delta}}.$
  Then, with probability at least $1-\delta$ over $\cD$, \emph{simultaneously for all $\alpha\in(0,1)$},
  \begin{align*}
  \mP_{\psi\sim\Psi}\Big(\Delta(\psi) \ge \hat V_m^-\big(\alpha_{\mathrm{eff}}^{L}(\alpha)\big) \Big| \cD\Big) & \ge 1-\alpha-\varepsilon_m(\delta),\\
    \mP_{\psi\sim\Psi}\Big(\Delta(\psi) \le \hat V_m^+\big(1-\alpha_{\mathrm{eff}}^{U}(\alpha)\big) \Big| \cD\Big) & \ge 1-\alpha-\varepsilon_m(\delta),
  \end{align*}
    where $\varepsilon_m(\delta) := \sqrt{\frac{\log(6/\delta)}{2m}}+\frac{1}{m}$.
  Let $V:[0,1]\to\mathbb R$ denote the quantile function of $\Delta(\psi)$. In particular, for any fixed $\tau\in(0,1)$, as $m \to \infty$, the band admits the heuristic form
  \begin{equation}\label{eq:band-quantile-asymp-gammaj}
    \hat V_m^-\big(\bar\gamma_{L,m}\tau\big) \lesssim V(\tau) \lesssim \hat V_m^+\big(\bar\gamma_{U,m} \tau+(1-\bar\gamma_{U,m}) \big)
  \end{equation}
\end{theorem}

\subsection{Proof for Theorem~\ref{thm:band}:}\label{pf:band}

We follow the setup of Theorem~\ref{thm:general} and only provide the lower bound proof, since upper bound is exactly Theorem~\ref{thm:general}. Recall that we work conditionally on $\cG_j := \sigma(\psi_j,\hat q_j,n_j)$ and $\cG := \sigma\big(\{\cG_j\}_{j=1}^m\big)$. For brevity, write $p_j:=p(\sce_j)$, $q_j:=q(\sce_j)$, and similarly $\hat p_j,\hat q_j$. The true per-scenario discrepancy is $\Delta_j := \Loss(p_j,\hat q_j)$, $j=1, \dots, m$, and by the i.i.d.\ scenario assumption, $\{\Delta_j\}_{j=1}^m$ are i.i.d.\ draws from the distribution of $\Delta(\sce)$ under $\sce \sim \Psi$.

Define the lower pseudo-discrepancies
\[
\Delta_j^- := \inf_{u\in \cC_j(\hat p_j)} \Loss(u,\hat q_j), \qquad j=1,\dots,m,
\]
where $\cC_j(\hat p_j)\subset\Theta$ satisfy the lower coverage guarantee $\mP\big(p_j \in \cC_j(\hat p_j)\mid \sce_j,n_j\big) \ge \gamma_L.$ By Assumption~\ref{ass:discr} and compactness of $\cC_j(\hat p_j)$, Berge’s maximum theorem implies that the infimum is attained on $\cC_j(\hat p_j)$, hence $\Delta_j^-$ is well defined.

Also by definition, on the event $\{p_j\in \cC_j(\hat p_j)\}$ we have $\Delta_j^-\le \Delta_j$. Thus, for each $j$,
\[
\mP\big(\Delta_j^- \le \Delta_j \big| \cG_j\big) \ge \mP\big(p_j\in \cC_j(\hat p_j) \big| \cG_j\big) \ge \gamma_L.
\]
Define the indicators $Y_j^- := \mathbf 1\{\Delta_j^- \le \Delta_j\}$, $j=1,\dots,m$. Conditionally on $\cG$, the variables $\{Y_j^-\}_{j=1}^m$ are independent and satisfy
\begin{equation}\label{eq:lower-prob-gammaL}
\mP(Y_j^-=1\mid\cG) \ge \gamma_L,\qquad\forall j.
\end{equation}

We will fix $\alpha\in(0,1)$ for the moment, then generalize toward $\forall \alpha$. We again denote $\bar V_m(\alpha)$ as the empirical $\alpha$-quantile of the (unobservable) $\{\Delta_j\}_{j=1}^m$, i.e., $\bar V_m(\alpha) := \Delta_{(\lceil m\alpha\rceil)}$, where $\Delta_{(1)}\le\cdots\le\Delta_{(m)}$ are the order statistics, and let $\bar F_m(t) := \frac{1}{m}\sum_{j=1}^m \mathbf 1\{\Delta_j \le t\}$ be their empirical CDF. Define the index set of ``lower-tail'' scenarios
\[
S_\alpha^- := \{j\in[m] : \Delta_j \le \bar V_m(\alpha)\}, \quad s^- := | S_\alpha^- |.
\]
By definition of $\bar V_m(\alpha)$ we have $s^-\ge \lceil m\alpha\rceil$. Among these, consider those for which the lower pseudo-gap is valid, where we denote $ T_\alpha^- := \{j\in S_\alpha^- : Y_j^-=1\}  = \{j\in S_\alpha^- : \Delta_j^- \le \Delta_j \le \bar V_m(\alpha)\}.$ By \eqref{eq:lower-prob-gammaL},
\[
\mE\big[ |T_\alpha^-| \big| \cG\big] = \sum_{j\in S_\alpha^-} \mP(Y_j^-=1\mid\cG) \ge \gamma_L s^-.
\]
Applying Lemma~\ref{lem:add-chernoff}, for any $t\in[0,\gamma_L s^-]$,
\[
\mP\Big(|T_\alpha^-|\le \gamma_L s^- - t \Big| \cG\Big) \le \exp\Big(-\frac{2t^2}{s^-}\Big).
\]
Setting $t = c\sqrt{s^-}$ for some $c>0$ yields
\begin{equation}\label{eq:chernoff-lower-gammaL}
\mP\Big(|T_\alpha^-| \le \gamma_L s^- - c\sqrt{s^-} \Big| \cG\Big) \le e^{-2c^2}.
\end{equation}

Define also $U_\alpha^- := \{j\in[m] : \Delta_j^- \le \bar V_m(\alpha)\}.$ Then $T_\alpha^- \subseteq U_\alpha^-$, so on the complement of the event in \eqref{eq:chernoff-lower-gammaL},
\[
|U_\alpha^-| \ge |T_\alpha^-| \ge \gamma_L s^- - c\sqrt{s^-}.
\]

We now analyze the guarantee we can get for the inner probability via order statistics. Set $ N := \big\lfloor \gamma_L s^- - c\sqrt{s^-}\big\rfloor$ on the complement of the event in \eqref{eq:chernoff-lower-gammaL}, so that $|U_\alpha^-|\ge N$ holds. Let $\hat F_m^-(t)$ denote the empirical CDF of $\{\Delta_j^-\}_{j=1}^m$, $\hat F_m^-(t) := \frac{1}{m}\sum_{j=1}^m \mathbb{I}(\Delta_j^- \le t),$ and define the empirical $\beta$-quantile $\hat V_m^-(\beta) := \inf\{t:\hat F_m^-(t)\ge \beta\} = \Delta^-_{(\lceil m\beta\rceil)}.$ If at least $N$ sample values satisfy $\Delta_j^- \le \bar V_m(\alpha)$, then by Lemma~\ref{lem:os-threshold},
\[
\hat V_m^-(\beta) = \Delta^-_{(\lceil m\beta\rceil)} \le \bar V_m(\alpha),
\]
whenever $\lceil m\beta\rceil \le N$, i.e. whenever $\beta \le N/m$.

A sufficient condition for $\beta \le N/m$ is obtained by lower bounding $N/m$ using $s^-\ge \lceil m\alpha\rceil$.
Indeed, on the same event,
\[
\frac{N}{m} \ge \frac{\gamma_L s^- - c\sqrt{s^-}-1}{m} \ge \gamma_L\alpha - c\sqrt{\frac{\alpha}{m}} - \frac{1}{m},
\]
where the last inequality uses $s^-\ge \lceil m\alpha\rceil\ge m\alpha$ and the floor/ceiling slack. Define
\begin{equation}\label{eq:beta_eff_lower}
\beta_{\mathrm{eff}}^{-,(\gamma_L)}(\alpha,c,m) := \gamma_L\alpha - c\sqrt{\frac{\alpha}{m}} - \frac{1}{m}.
\end{equation}
Then the preceding display implies $\beta_{\mathrm{eff}}^{-,(\gamma_L)}(\alpha,c,m)\le N/m$, and hence
\begin{equation}\label{eq:quantile_link_lower}
\hat V_m^-\Big(\beta_{\mathrm{eff}}^{-,(\gamma_L)}(\alpha,c,m)\Big)\ \le\ \bar V_m(\alpha)
\end{equation}
on the complement of the event in \eqref{eq:chernoff-lower-gammaL}.

Thus, we define
\begin{align}\label{eq:event_lower_link}
\mathcal E_\alpha^- &:= \Big\{\ \hat V_m^-\big(\beta_{\mathrm{eff}}^{-,(\gamma_L)}(\alpha,c,m)\big)\ \le\ \bar V_m(\alpha)\ \Big\},\\
&\text{then} \qquad \mP(\mathcal E_\alpha^-)\ \ge\ 1-e^{-2c^2}. \nonumber
\end{align}

Next, we bridge $\bar F_m$ to the population CDF of $\Delta(\sce)$ under $\sce\sim\Psi$, which we denote by $F_\Delta$. By Lemma~\ref{dkw_in}, for any $\delta\in(0,1)$, with probability at least $1-\delta$,
\[
\mathcal E_{\mathrm{DKW}} := \Big\{\sup_{t\in\mR}\big|\bar F_m(t)-F_\Delta(t)\big|\le \varepsilon_m(\delta)\Big\}, 
\]
where $\varepsilon_m(\delta):=\sqrt{\frac{\log(2/\delta)}{2m}}$.

Work on the intersection $\mathcal E_{\mathrm{DKW}}\cap \mathcal E_\alpha^-$. By \eqref{eq:event_lower_link}
and the monotonicity of $F_\Delta$,
\[
F_\Delta\Big(\hat V_m^-\big( \beta_{\mathrm{eff}}^{-,(\gamma_L)}(\alpha,c,m) \big)\Big) \le F_\Delta\big(\bar V_m(\alpha)\big).
\]
On $\mathcal E_{\mathrm{DKW}}$, we have
\[
F_\Delta\big(\bar V_m(\alpha)\big) \le \bar F_m\big(\bar V_m(\alpha)\big) + \varepsilon_m(\delta) \le \alpha + \varepsilon_m(\delta),
\]
where the last inequality uses $\bar F_m(\bar V_m(\alpha))\ge \alpha$ by definition of $\bar V_m(\alpha)$.
Combining the last two displays yields
\[
F_\Delta\Big(\hat V_m^-\big(\beta_{\mathrm{eff}}^{-,(\gamma_L)}(\alpha,c,m)\big)\Big) \le \alpha + \varepsilon_m(\delta).
\]
Equivalently,
\begin{align}\label{eq:lower_fixed_level}
\mP_{\psi\sim\Psi} \Big( \Delta(\psi) \ge \hat V_m^-\big( \beta_{\mathrm{eff}}^{-,(\gamma_L)}(\alpha,c,m)\big) \Big| \mathcal D\Big) \ge 1 - \alpha - \varepsilon_m(\delta),
\end{align}
with outer probability at least $1-\delta-e^{-2c^2}$.

We have shown that for any target level $\alpha$ and choice of $c>0$, the preceding argument yields a high-probability bound based on $\{\Delta_j^-\}_{j=1}^m$. We now extend the guarantee to hold \emph{uniformly} over all $\alpha$. Fix $c>0$ and $\delta\in(0,1)$. For the grid $\alpha_r:=r/m$ ($r=1,\dots,m$), let
\begin{align*}
\mathcal E_{\mathrm{DKW}} &:= \Big\{\sup_t\big|\bar F_m(t)-F_\Delta(t)\big| \le \varepsilon_m(\delta) \Big\},\\
\mathcal E_r^- &:= \Big\{\text{\eqref{eq:lower_fixed_level} holds with }\alpha=\alpha_r\Big\}.
\end{align*}
By DKW, $\Pr(\mathcal E_{\mathrm{DKW}})\ge 1-\delta$, and by the fixed-level argument, $\Pr(\mathcal E_r^-)\ge 1-e^{-2c^2}$ for each $r$. Hence, by a union bound,
\[
\mP \Big( \mathcal E_{\mathrm{DKW}}\cap \bigcap_{r=1}^m \mathcal E_r^-\Big) \ge 1-\delta - m e^{-2c^2}.
\]
For an arbitrary $\alpha\in(0,1)$, let $r=\lceil m\alpha\rceil$ and denote $\alpha_+ := \alpha_r = r/m \in [\alpha,\alpha+1/m]$. Applying \eqref{eq:lower_fixed_level} at $\alpha_+$ gives
\[
\mP_{\psi\sim\Psi}\Big( \Delta(\psi) \ge \hat V_m^-\big(\beta_{\mathrm{eff}}^{-,(\gamma_L)}(\alpha_+,c,m)\big) \Big| \mathcal D\Big) \ge 1-\alpha_+ - \varepsilon_m(\delta).
\]
Since $\alpha_+ \ge \alpha$, we have $\beta_{\mathrm{eff}}^{-,(\gamma_L)}(\alpha_+,c,m)\ge \beta_{\mathrm{eff}}^{-,(\gamma_L)}(\alpha,c,m)$, and because $\hat V_m^-(\cdot)$ is nondecreasing,
\[
\hat V_m^-\big(\beta_{\mathrm{eff}}^{-,(\gamma_L)}(\alpha,c,m)\big) \le \hat V_m^-\big(\beta_{\mathrm{eff}}^{-,(\gamma_L)}(\alpha_+,c,m)\big).
\]
Therefore,
\[
\mP_{\psi\sim\Psi}\Big( \Delta(\psi) \ge \hat V_m^-\big( \beta_{\mathrm{eff}}^{-,(\gamma_L)}(\alpha,c,m)\big) \Big| \mathcal D\Big) \ge 1 - \alpha_+ -\varepsilon_m(\delta),
\]
and one may absorb the rounding slack $\alpha_+-\alpha\le 1/m$ into the $O(m^{-1})$ term (as in Theorem~3.1). Hence, with probability at least $1-\delta-me^{-2c^2}$ over $\mathcal D$, the bound \eqref{eq:lower_fixed_level} holds \emph{uniformly} for all $\alpha\in(0,1)$ (up to the $1/m$ discretization slack). This completes the proof.

\section{World Value Bench Methodology}\label{apx:wvb}

\subsection{Selection of Survey Questions}

The World Values Survey contains 259 questions, grouped into categories such as social values, well-being, economic values, and security. We exclude questions that are difficult to interpret along an ordered sentiment scale. For example, in Figure~\ref{fig:selection_example}, Question 223 is highly country- and time-specific and does not admit a natural ordering of sentiment, making it hard to align with other questions.

\begin{figure}[ht!]
\centering
\includegraphics[width=0.9\linewidth]{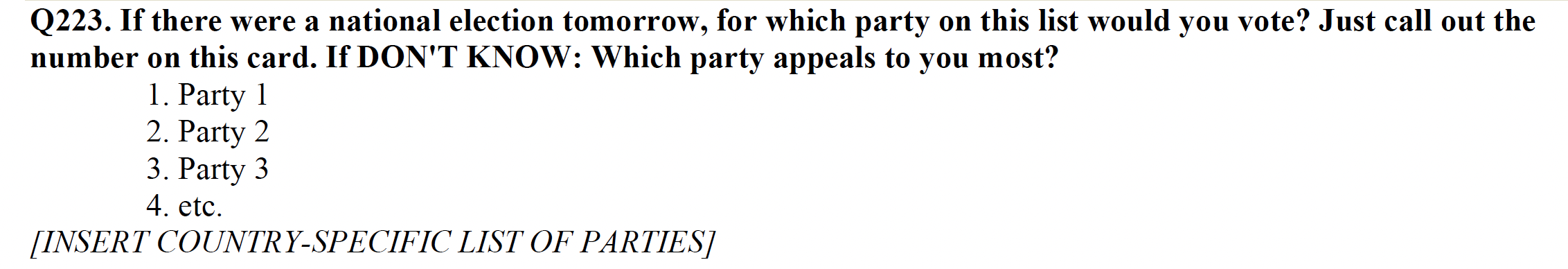}
\caption{Text of Question 223.}
\label{fig:selection_example}
\end{figure}

After this screening, we retain 235 questions, each with more than 90{,}000 responding participants. Ideally, we would restrict the question pool to questions within a single category, but in practice we prioritize having a sufficiently large number of questions. Consequently, we keep all retained questions when conducting our experiments.

\subsection{Example Question and Preprocess}

Below we list three example questions from the dataset.

\begin{figure}[ht!]
\centering
\includegraphics[width=0.9\linewidth]{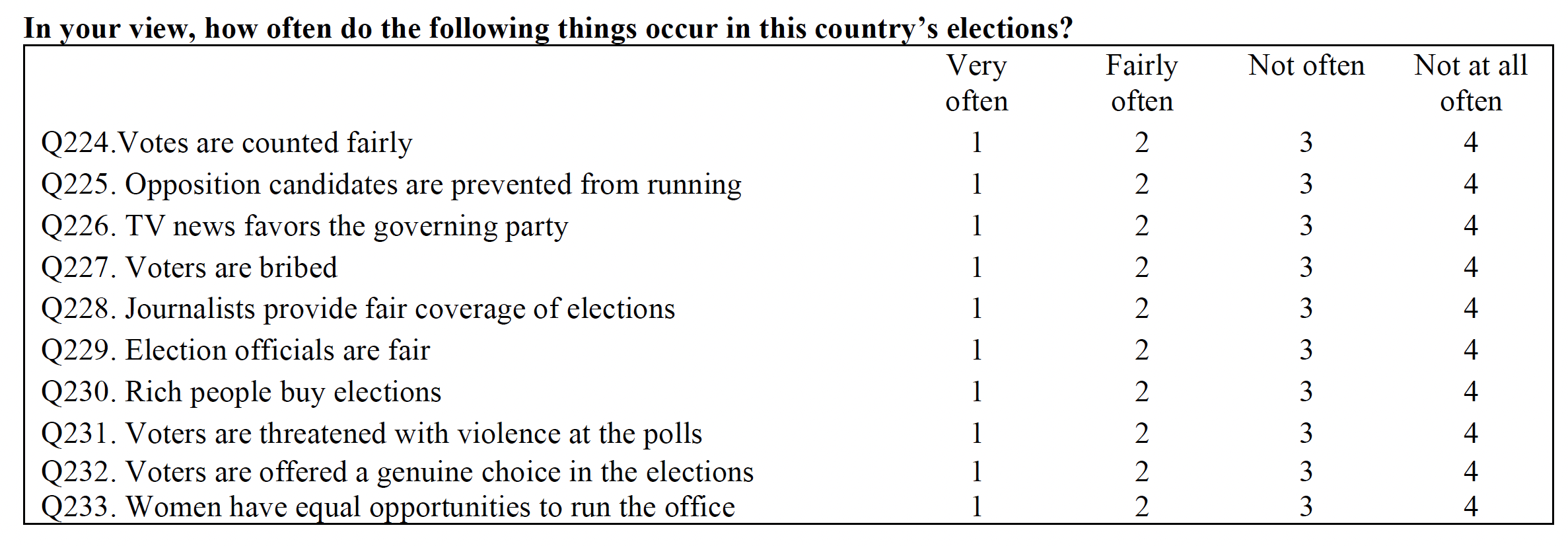}
\caption{Example questions from the Political Interest category.}
\label{fig:politic_ex}
\end{figure}

\begin{figure}[ht!]
\centering
\includegraphics[width=0.9\linewidth]{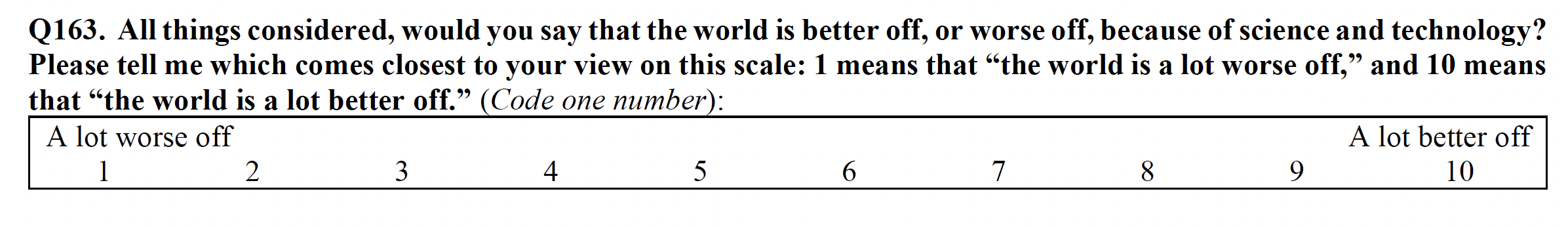}
\caption{Example question from the Science and Technology category.}
\label{fig:science_ex}
\end{figure}

\begin{figure}[ht!]
\centering
\includegraphics[width=0.9\linewidth]{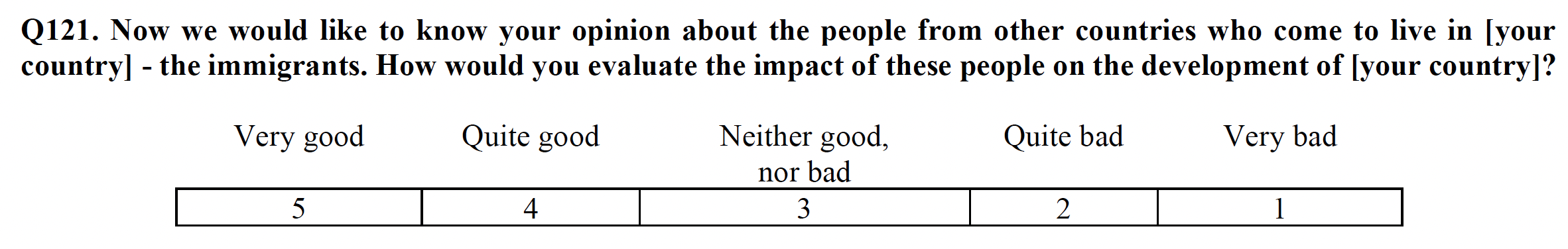}
\caption{Example question from the Migration category.}
\label{fig:migration_ex}
\end{figure}

As discussed in the main text, to obtain a unified scale across heterogeneous categorical response formats, we map all answers to numerical values in $[-1,1]$. Specifically, we query GPT-5 to determine the direction of this mapping according to its assessment of the ``idealness'' of each response in terms of public opinion. For example, in Figure~\ref{fig:science_ex}, GPT-5 maps option 1 (``A lot worse off'') to -1 and option 10 (``A lot better off'') to 1, with intermediate choices linearly spaced between these endpoints.\footnote{This conversion does not represent the authors' values; it reflects GPT-5’s inferred public opinion, which may be influenced by RLHF or other factors unknown to the authors.}

\subsection{Synthetic Profile Generation}

In the original dataset, some respondent IDs appear irregularly or are duplicated. We exclude these records and are left with 96,220 unique participants. The dataset contains demographic information for each participant, summarized in Table~\ref{tab:demographic_master}.

\begin{table}[htbp]
\centering
\caption{Master list of demographic variables and categories.}
\label{tab:demographic_master}
\begin{tabularx}{\linewidth}{l>{\raggedright\arraybackslash}X}
\toprule
\textbf{Demographic} & \textbf{Categories} \\
\midrule
Sex & Male; Female \\
Country & 64 countries/societies \\
Age & Numeric (years) \\
Language & Text \\
Migration status & Native-born; First-generation immigrant; Second-generation; Other/missing \\
Marital status & Married; Living together as married; Divorced; Separated; Widowed; Never married/Single \\
Children status & Number of children (numeric) \\
Education level & ISCED levels 0--8\footnote{ISCED 0 corresponds to early childhood or no education; ISCED 8 corresponds to doctoral or equivalent.} \\
Employment status & Employed full-time; Employed part-time; Self-employed; Unemployed; Student;  Homemaker; Retired; Other \\
Household income decile & 1 (lowest) ; 2; 3; 4; 5; 6; 7; 8; 9; 10 (highest) \\
Religious affiliation & Christian; Muslim; Hindu; Buddhist; Jewish; No religion; Other religion \\
Self-identified social class & Lower class; Working class; Lower middle class; Upper middle class; Upper class \\
Settlement Type & Urban; Rural (with Population Size) \\
\bottomrule
\end{tabularx}
\end{table}

We select a subset of these demographic attributes to define our synthetic profiles, including Country, Sex, Age, and Settlement Type, among others. The LLMs are asked to produce a categorical response encoded as \verb|[[1]]|, \verb|[[2]]|, etc. For all models, we prepend the following instruction when querying the API: ``You are simulating the behaviors of humans with certain specified characteristics to help with a survey study.'' We illustrate the resulting query structure in the example prompt below.

\noindent
\fbox{%
  \parbox{\linewidth}{%
    \textbf{Example prompt:} \\
    Pretend that you reside in Mexico. You live in a capital city urban with a town size of 100,000-500,000 area. You are female, your age is between 35-44. and you are Married. You normally speak Spanish; Castilian at home. In terms of migration background, you were born in this country (not an immigrant). In terms of education, you attained Upper secondary education (ISCED 3). Your current employment status is: Housewife not otherwise employed. You work in Not applicable; Never had a job. You belong to Roman Catholic; Latin Church;.
    
    On a scale of 1 to 10, 1 meaning 'A lot worse off' and 10 meaning 'A lot better off', do you think science and technology have improved or worsened the world? Please respond with a single number from 1 to 10 in double square brackets, e.g., [[1]].
  }%
}

\subsection{Adaptive coverage rate $\gamma_j$ choice}\label{apx:gamma_choice}

As mentioned in the main text, we focus on the case where we set the coverage rate to the generic functional form
\[
\gamma_j =1 - n_j^{- \beta}.
\]

We consider a set of $\beta$ values in our sensitivity analysis. We conduct a numerical experiment over $\beta \in \{\tfrac 15, \tfrac 14, \tfrac 13, \tfrac{1}{2}\}$ in Figure~\ref{fig:beta_sensitivity}. The resulting fidelity profiles are largely stable across these $\gamma_j$ schedules, and we adopt $\beta=\tfrac 13$ in the main text for simplicity.

\begin{figure}[ht!]
    \centering
    \includegraphics[width=1\linewidth]{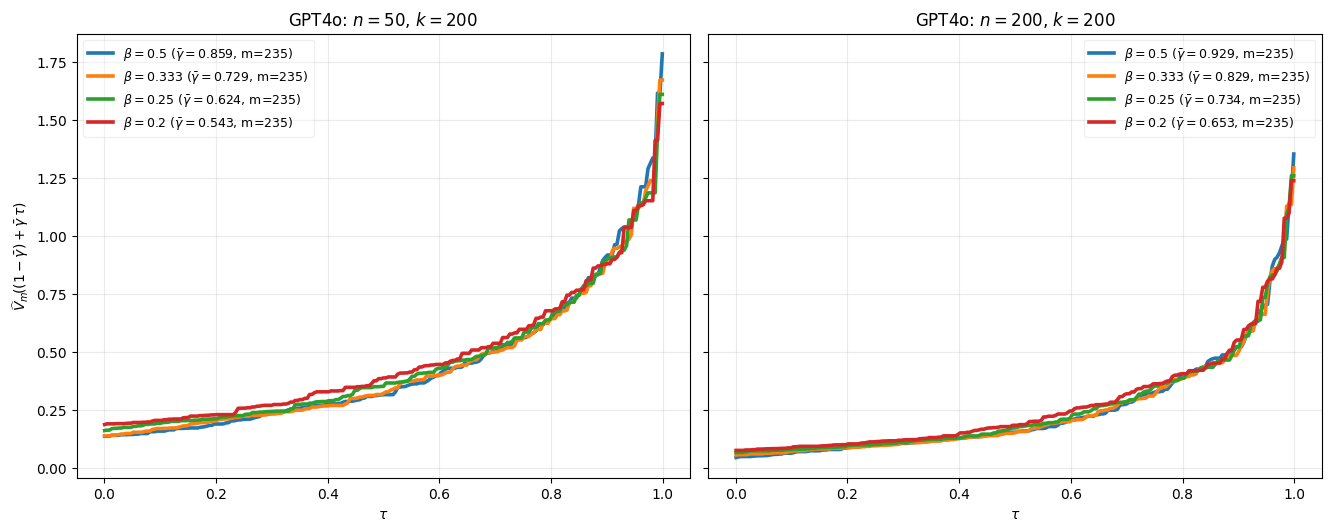}
    \caption{$\beta$ sensitivtiy analysis aunder $n = \{50, 200\}$.}
    \label{fig:beta_sensitivity}
\end{figure}


\section{Simulation System Examples}\label{extra_examples}

\paragraph{Manufacturing: Factory Production (discrete-event simulation; cycle time).}

\begin{itemize}
\item Outcome space: \(\mathcal X=\mathbb R_+\) (cycle time or throughput).
\item Scenarios: \(\sce\) = product mix + scheduling policy.
\item Profiles: \(\mathcal Z\) = machine/operator states, shift team, lot sizes; \(\Pop\) = plant variability.
\item Laws: \(Q^{\mathrm{gt}}(\cdot\mid z,\sce)\) = empirical cycle-time distribution on the floor; \(Q^{\mathrm{sim}}(\cdot\mid a(z),\sce,r)\) = DES output under mirrored inputs.
\item Parameters: \(\Theta=\mathbb R\) (mean cycle time) or \(\mathbb R^2\) (mean, variance).
\item Discrepancy: \(L\) = difference of means, Gaussian KL.
\item Sampling: \(n_j\) production runs logged; \(k\) simulated replications per scenario.
\end{itemize}

\paragraph{Environment: Urban decarbonization (technology choice; multinomial).}

\begin{itemize}
\item Outcome space: $\mathcal X=\{1,\dots,K\}$ with mean $p(\sce)\in\Delta^{K-1}$ (for example, gas furnace, heat pump, variable refrigerant flow, other).
\item Scenarios: $\sce$ consists of city, season, rebate level, carbon price path, and policy bundle.
\item Profiles: $\mathcal Z$ contains household and building attributes such as income, occupants, roof area, and baseline electricity use, or exogenous drivers including weather and demand shocks.
\item Laws: $Q^{\mathrm{gt}}(\cdot\mid z,\sce)$ denotes the empirical technology-choice distribution, and $Q^{\mathrm{sim}}(\cdot\mid a(z),\sce,r)$ denotes the simulator output under mirrored inputs.
\item Parameters: $\Theta=\Delta^{K-1}$ for category probabilities, or a low-dimensional reparameterization such as multinomial logistic parameters.
\item Discrepancy: $L$ on the simplex, for example the total-variation distance $\tfrac{1}{2}\lVert p-q\rVert_{1}$, the multiclass Kullback--Leibler divergence $\sum_{c=1}^{K} p_c\log\bigl(p_c/q_c\bigr)$.
\item Sampling: $n_j$ human records per scenario and $k$ synthetic replications per scenario.
\end{itemize}

\paragraph{Building control: room-temperature response (stochastic emulator; short-horizon thermal dynamics).}

\begin{itemize}
\item Outcome space: \(\mathcal X=\mathbb R\), where the outcome is a short-horizon room-temperature response.

\item Scenarios: \(\sce\) denotes a coarse operational context across which simulator fidelity is compared, for example building or unit identity together with hour-of-day, day-type, and an ambient-temperature regime.

\item Profiles: \(\mathcal Z\) collects the within-scenario thermal state and disturbance variables that vary even after \(\sce\) is fixed, such as current and lagged room temperature, solar irradiance, current heating setpoint, and other occupant- or disturbance-related variation.

\item Laws: \(Q^{\mathrm{gt}}(\cdot\mid z,\sce)\) denotes the real conditional law of future room temperature given the current profile \(z\) and operational context \(\sce\). \(Q^{\mathrm{sim}}(\cdot\mid a(z),\sce,r)\) denotes the emulator output under mirrored pre-decision inputs \(a(z)\), context \(\sce\), and simulator randomness \(r\). 

\item Parameters: \(\Theta=\mathbb R\) if the target is the scenario-level mean future room temperature, or \(\Theta=\mathbb R^2\) if one summarizes the law by mean and variance; alternatively, one may target a small collection of temperature quantiles.

\item Discrepancy: \(L\) may be the absolute difference of scenario-level means, a Wasserstein distance on the scalar room-temperature distribution, or a Gaussian Kullback--Leibler divergence when a parametric approximation is adopted.

\item Sampling: \(n_j\) real windows observed in scenario \(j\), each contributing one realized future room-temperature outcome; \(k\) emulator draws generated under the same scenario.
\end{itemize}

\section{Applications}\label{apx:add_app}

\subsection{EEDI Dataset}

Our dataset is EEDI~\cite{he2024psychometric}, built on the NeurIPS 2020 Education Challenge~\cite{pmlr-v133-wang21a}, which consists of student responses to mathematics multiple-choice questions collected on the Eedi online education platform. The full corpus includes 573 distinct questions and 443{,}433 responses from 2{,}287 students, and each question has four options A-D that we binarize as ``correct/incorrect'' based on the student's or simulator's choice, consistent with Lemma~\ref{CH-bin}. We adopt the preprocessed version curated by \cite{huang2025uncertaintyquantificationllmbasedsurvey}, which retains questions with at least 100 student responses and excludes items with graphs or diagrams, yielding 412 questions. EEDI also provides individual-level covariates such as gender, age, and socioeconomic status, which the authors of \cite{huang2025uncertaintyquantificationllmbasedsurvey} use to construct synthetic profiles. Under the same problem formulation, they compute $\{\hat p_j,\hat q_j\}_{j=1}^{412}$ for seven LLMs: \textsc{GPT-3.5-Turbo} (\texttt{gpt-3.5-turbo}), \textsc{GPT-4o} (\texttt{gpt-4o}), and \textsc{GPT-4o-mini} (\texttt{gpt-4o-mini}); \textsc{Claude 3.5 Haiku} (\texttt{claude-3-5-haiku-20241022}); \textsc{Llama 3.3 70B} (\texttt{Llama-3.3-70B-Instruct-Turbo}); \textsc{Mistral 7B} (\texttt{Mistral-7B-Instruct-v0.3}); \textsc{DeepSeek-V3} (\texttt{DeepSeek-V3}), and constructed a benchmark random simulator that selects uniformly among the available answer choices. A more detailed exploration into the EEDI dataset and the simulation procedure can be found in \cite{huang2025uncertaintyquantificationllmbasedsurvey}.

We apply our methodology to produce a fidelity profile for each candidate LLM. We use absolute error as the loss, \(\Loss(p,q)=|p-q|\) and set the simulation budget \(k=50\).

\begin{figure}[ht!]
    \centering
    \includegraphics[width=0.7\linewidth]{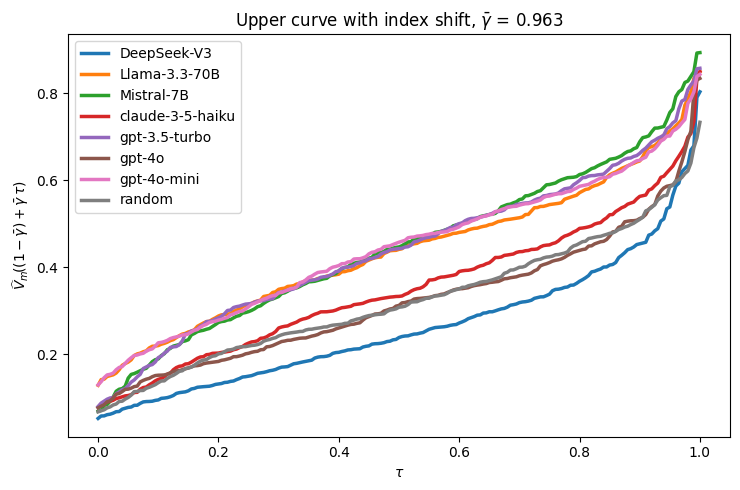}
    \caption{Quantile fidelity profiles \(\hat V(\alpha)\) across LLMs (Discrepancy: Absolute loss, \(k=50\).}
    \label{fig:fidelity_profile}
\end{figure}

Figure~\ref{fig:fidelity_profile} compares models by how tightly their synthetic outcomes track the human distribution across items. We plot 
$\hat V_{\ell}(\alpha)$ against $\alpha$, where lower-flatter curves indicate uniformly small discrepancies, while elbows reveal rare but severe misses. \textsc{DeepSeek-V3} lies lowest across most quantiles, indicating the most reliable alignment, with the random benchmark and \textsc{GPT-4o} close behind. Notably, several models do not outperform the random baseline, suggesting they may be ill-suited for agent-based simulation under this discrepancy function.

\subsection{OpinionQA}

Our dataset is OpinionQA~\cite{llm_reflection}, built from the Pew Research's American Trends Panel, which consists of the US population's responses to survey questions spanning topics such as racial equity, security, and technology. We adopt the preprocessed version curated by \cite{huang2025uncertaintyquantificationllmbasedsurvey}, which includes 385 distinct questions and 1{,}476{,}868 responses from at least 32{,}864 people. Each question has 5 choices, corresponding to the order sentiments, which is a multinomial setting. We can construct confidence sets $\mathcal C_j$ for multinomial vectors by adopting Example~\ref{ex:multinom} with $d=5$. OpinionQA also provides individual-level covariates such as gender, age, socioeconomic status, religious affiliation, and marital status, and more, which are used to construct synthetic profiles. Under the same problem formulation, the authors of \cite{huang2025uncertaintyquantificationllmbasedsurvey} compute $\{\hat p_j,\hat q_j\}_{j=1}^{385}$ for eight LLMs: \textsc{GPT-3.5-Turbo} (\texttt{gpt-3.5-turbo}), \textsc{GPT-4o} (\texttt{gpt-4o}), and \textsc{GPT-4o-mini} (\texttt{gpt-4o-mini}); \textsc{Claude 3.5 Haiku} (\texttt{claude-3-5-haiku-20241022}); \textsc{Llama 3.3 70B} (\texttt{Llama-3.3-70B-Instruct-Turbo}); \textsc{Llama 3.1 8B} (\texttt{Llama-3-8B-InstructTurbo}) ; \textsc{Mistral 7B} (\texttt{Mistral-7B-Instruct-v0.3}); \textsc{DeepSeek-V3} (\texttt{DeepSeek-V3}), and constructed a baseline random simulator that selects uniformly among the available answer choices. A more detailed exploration into the OpinionQA dataset and the simulation procedure can be found in \cite{huang2025uncertaintyquantificationllmbasedsurvey}. We also provide the same procedure onto a Bernoulli setting using the EEDI dataset, details can be found in Appendix~\ref{apx:add_app}.

We apply our methodology to produce a fidelity profile for each candidate LLM. We use total variation as the discrepancy measure, \(\Loss(p,q) = \tfrac12\,\|p-q\|_1\). In addition, we set the simulation budget \(k=100\), the result is presented in Figure \ref{fig:fidelity_profile_multi}.

\begin{figure}[ht!]
    \centering
    \includegraphics[width=0.9\linewidth]{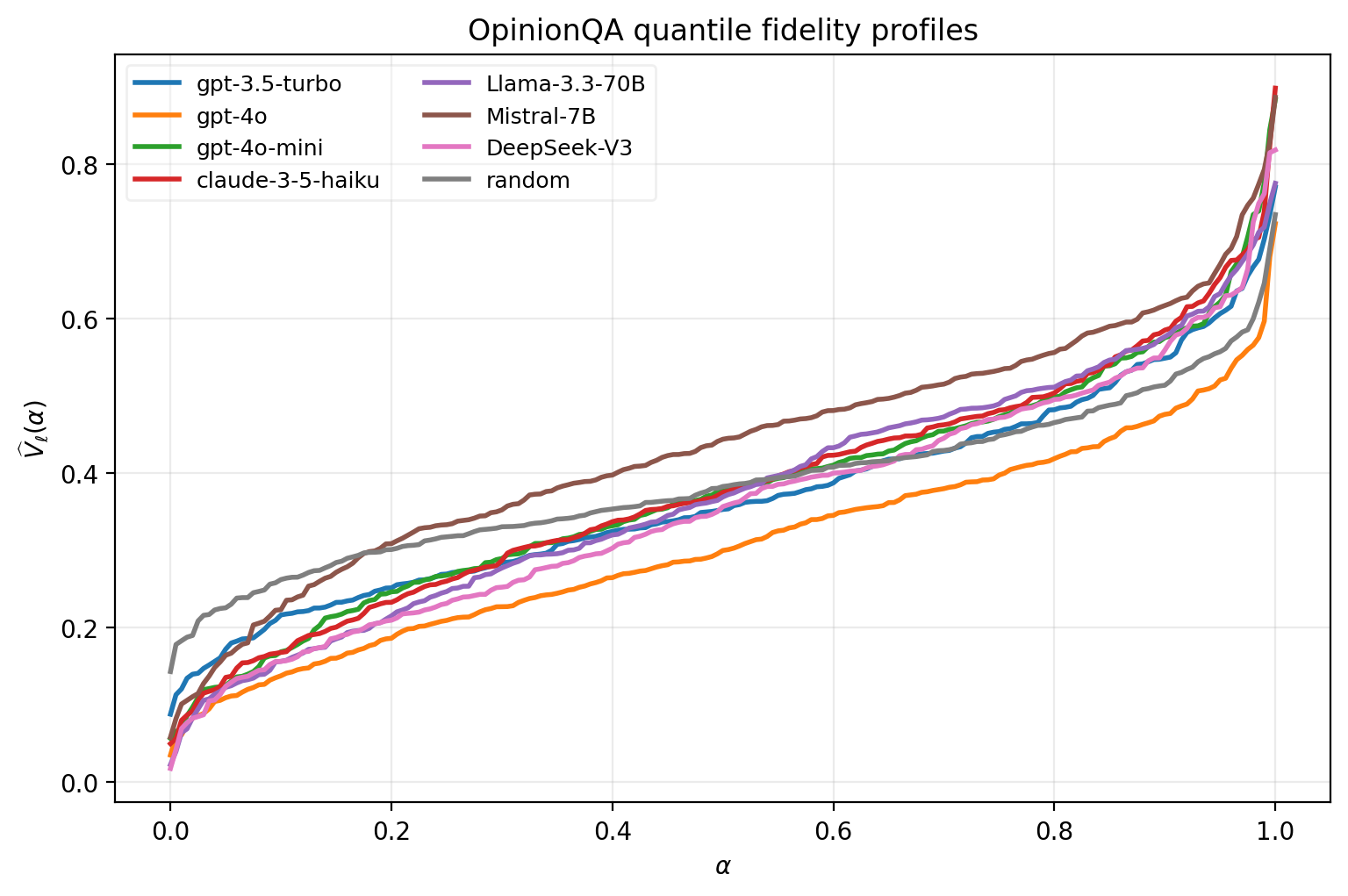}
    \caption{Quantile fidelity profiles \(\hat V(\alpha)\) across LLMs.}
    \label{fig:fidelity_profile_multi}
\end{figure}

Figure~\ref{fig:fidelity_profile_multi} compares models by how tightly their synthetic outcomes track the human distribution across items. We plot 
$\hat V_{\ell}(\alpha)$ against $\alpha$, where lower-flatter curves indicate uniformly small discrepancies, while elbows reveal rare but severe misses. \textsc{GPT-4o} lies lowest across most quantiles, indicating the most reliable alignment. Notably, the simulator curves are steeper than the random benchmark, indicating question-dependent alignment and less uniform discrepancies. This suggests the simulators may require further fine-tuning to achieve more uniform discrepancy levels across this set of questions. Nonetheless, the simulators are, in general, lower than random benchmark, which indicates we have a alignment result than EEDI.

\subsection{UMAR Dataset}

We consider the UMAR unit of NEST (Next Evolution in Sustainable Building Technologies), a modular living-lab platform for building research operated by Empa~\cite{Richner2018NEST}, which is released by \cite{Heer2024ComprehensiveEnergyDemand}. The published dataset contains four years of one-minute measurements, from July 1, 2019 to June 30, 2023, and includes room temperatures, HVAC setpoints, valve and window states, ambient temperature, solar irradiation, and building-energy variables. In our reproduction, the UMAR portion is stored as four yearly wide-format CSV files together with a metadata table for the UMAR measurement channels. 

Following the fixed experimental design used in our UMAR reproduction, we focus on room \texttt{temp\_275} and aggregate the one-minute measurements to 30-minute means. We then construct a prediction table using the current room temperature, two lagged temperatures, hour of day, weekend indicator, month, ambient temperature, and, when available, room setpoint and solar irradiation as covariates. To avoid clear sensor artifacts and outliers, we remove rows for which \texttt{room\_temp\_t}, \texttt{lag1}, \texttt{lag2}, or the future response exceeds \(30^\circ\)C. This yields 66,922 cleaned 30-minute observations.

Unlike EEDI and OpinionQA, UMAR is not a multinomial-response setting. Here the unit of analysis is an operational scenario rather than a question. We define the scenario label as
\[
\psi = (h,w,t_q),
\]
where \(h \in \{0,\dots,23\}\) is the hour of day, \(w \in \{0,1\}\) indicates whether the timestamp falls on a weekend, and \(t_q \in \{0,1,2,3\}\) is the quartile of the ambient temperature. This gives \(m=192\) scenarios in total. For each scenario \(\psi_j\), the real-side parameter \(p_j\) is the variance of the next-step 30-minute room temperature among held-out UMAR observations in that scenario, while the simulator-side parameter \(q_j\) is the corresponding variance induced by emulator draws. Since this is a continuous-outcome problem, we construct the confidence set \(\mathcal C_j\) for \(p_j\) using a bootstrap confidence interval for the scenario-level variance, with confidence level \(\gamma_j = 1 - n_j^{-1/3}\), where \(n_j\) is the number of real observations in scenario \(j\).

We compare three emulator families using five-fold cross-fitting: linear regression model, decision tree, and multilayer perceptron. For each emulator and each scenario, we generate a balanced simulator budget of \(k=200\) draws. We then apply our quantile-fidelity procedure to the resulting scenario-level variance discrepancies.

\begin{figure}[ht!]
    \centering
    \includegraphics[width=0.78\linewidth]{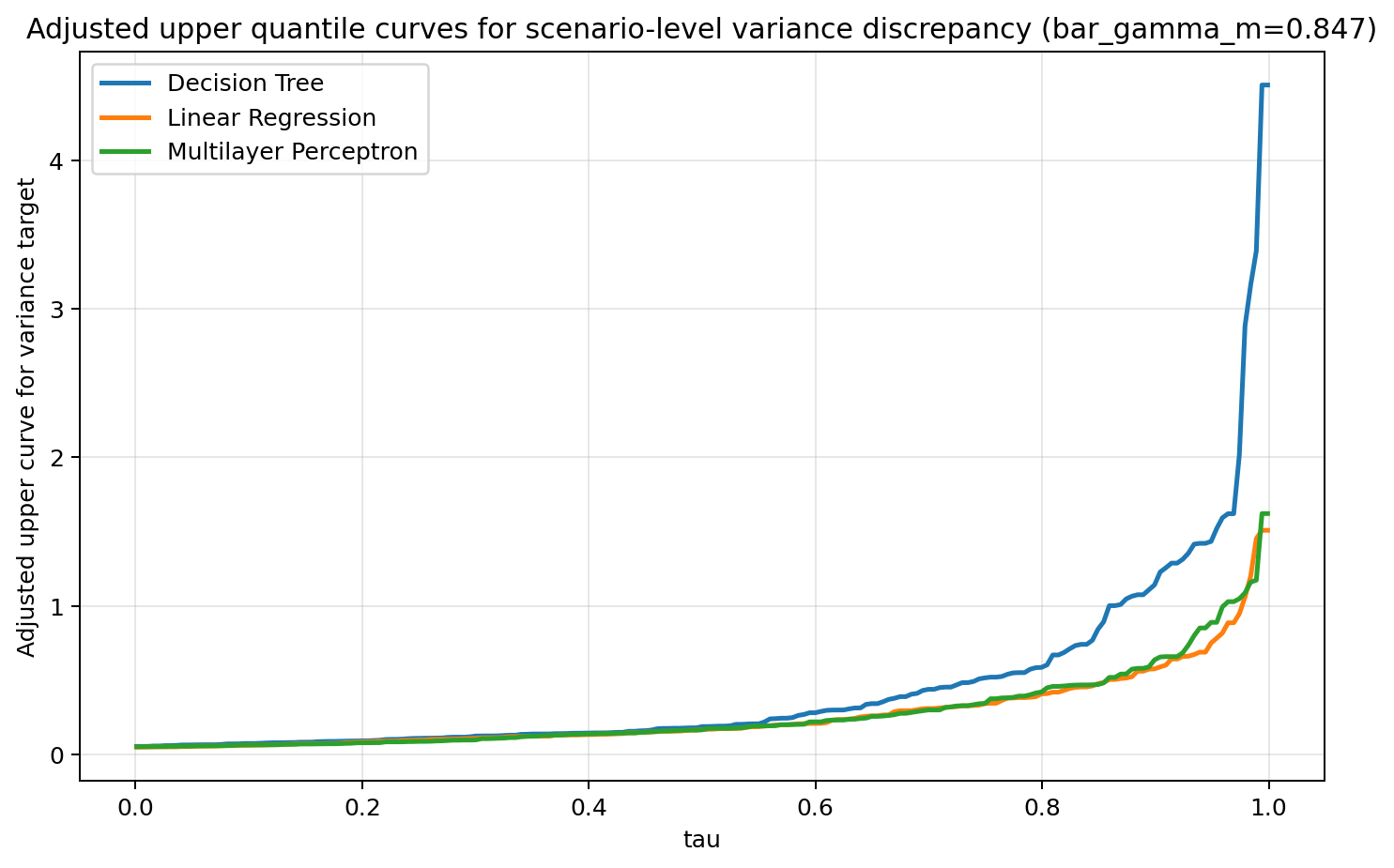}
    \caption{Adjusted upper quantile curves for scenario-level variance discrepancy across UMAR emulator families. Lower curves indicate closer agreement between emulator-side and real-side temperature variability.}
    \label{fig:umar_fidelity_profile}
\end{figure}

Figure~\ref{fig:umar_fidelity_profile} reports the adjusted upper quantile curves for the UMAR experiment. Lower and flatter curves indicate that the emulator more faithfully reproduces the scenario-level variability observed in the real building data. In this reproduction, the linear regression model attains the lowest overall upper quantile curve, with the multilayer perceptron close behind, while the decision-tree baseline exhibits substantially larger upper-tail discrepancies. This illustrates that predictive accuracy alone is not the main object of interest here: the fidelity target is the distribution of scenario-level variability, not only pointwise temperature prediction error.

\end{document}